\tikzset{
    asnode/.style={
        circle, 
        draw=black!60, 
        shading=radial,
        outer color={rgb,255:red,137;green,207;blue,240},
        inner color=white, 
        thick, 
        minimum size=8mm,
    }
}
\newcommand{\pan}{PAN}
\renewcommand{\d}[1]{\,\mathrm{d}#1}
\newtheorem{theorem}{Theorem}
\newcommand{\markus}[1]{\bgroup\color{purple}Markus: #1\egroup}
\crefname{section}{\S}{Sections}
\crefname{page}{page}{pages}
\crefname{paragraph}{\S}{Sections}
\Crefname{section}{Section}{Sections}
\definecolor{lightgray}{gray}{0.95}
{%
    \begin{tcolorbox}[breakable, colback=lightgray, colframe=lightgray, left=1pt,right=1pt,top=0pt,bottom=0pt]%
    \itshape
    \ifthenelse{\isempty{#1}}{}{\noindent\textbf{#1}\quad}%
}%
{\end{tcolorbox}}
\begin{document}

\title{\texorpdfstring{Enabling Novel Interconnection Agreements\\with Path-Aware Networking Architectures}{Enabling Novel Interconnection Agreements\\with Path-Aware Networking Architectures}}

\author{Simon Scherrer\\Department of\\Computer Science\\ETH Zurich \and Markus Legner\\Department of\\Computer Science\\ETH Zurich \and Adrian Perrig\\Department of\\Computer Science\\ETH Zurich \and Stefan Schmid\\Faculty of\\Computer Science\\University of Vienna}

\maketitle
\thispagestyle{plain} 
\pagestyle{plain}

\begin{abstract}
Path-aware networks (PANs) are emerging as an intriguing new paradigm
with the potential to significantly improve the
dependability and efficiency of networks. 
However, the benefits of PANs 
can only be realized if the adoption of such architectures
 is economically viable. 
This paper shows that PANs
enable novel interconnection agreements among autonomous
systems, which allow to considerably improve
both economic profits and
path diversity compared to today's Internet.
Specifically, by supporting
packet forwarding along a path selected by
the packet source, PANs do not 
require the Gao--Rex\-ford conditions to ensure 
stability. Hence,   
autonomous systems can establish novel agreements,
creating new paths
which demonstrably improve latency and bandwidth metrics 
in many cases.
This paper also expounds two methods to
set up agreements
which are Pareto-optimal, fair, and thus attractive
to both parties. We further present a bargaining mechanism that
allows two parties to efficiently automate agreement negotiations. 
\end{abstract}

\section{Introduction}
\label{sec:introduction}

\emph{Path-aware networks} (\pan{s}) 
are an innovative networking paradigm
which has the potential to improve the dependability and efficiency
of networks by increasing the flexibility
in packet forwarding. In contrast to today's Internet, \pan{s}
enable end-hosts to choose the path at the level of 
autonomous systems (ASes), which is then embedded in the header of data packets.
As such, they are not limited to using a single
path between a pair of ASes, but can use all available paths 
simultaneously. This multi-path approach has
two important consequences.
First, the availability of multiple paths increases 
the network's  resilience to link failures and its overall
capacity through the possibility to avoid congested links.
Second, the possibility of path selection enables end-hosts to
choose paths based on their applications' requirements---e.g.,
low latency for voice-over-IP calls and high bandwidth for
file transfers.

Over the past two decades, significant progress has
been made regarding the technical
prerequisites for realizing these 
\pan{} benefits. Numerous PAN architectures have been proposed---%
including
Platypus~\cite{Raghavan2004,Raghavan2009}, 
PoMo~\cite{Bhattacharjee2006}, NIRA~\cite{Yang2007},
Pathlets~\cite{godfrey2009pathlet}, NEBULA~\cite{Anderson2013},
and SCION~\cite{Zhang2011,perrig2017scion}---all
of which enable end-host
path selection between
provider-acknowledged paths
(in contrast to source
routing, where end-hosts
are trusted to
construct paths themselves).
In particular, SCION is already 
in production use since 2017, 
when a large Swiss bank switched 
a branch to only rely on SCION 
for communication with their data center. 
Since then, 7 ISPs are now commercially 
offering SCION connections~\cite{swisscom,anapaya}. 
An important aspect of today's SCION
production deployment is 
that it is operating independently of BGP, 
so it not an overlay network over BGP.

While PAN architectures have thus
experienced partial deployment, 
surprisingly little is known today 
about the \emph{interconnection agreements}
between ASes
possible in such architectures. These interconnection agreements, however, 
are highly relevant for both ASes and end-hosts.
From the perspective of ASes,
interconnection agreements determine
the economic opportunities offered 
by \pan{s},
which are critical to PAN adoption. From the perspective
of end-hosts, interconnection 
agreements play an essential role
for path diversity; the extent
of path diversity, in turn, 
influences the
magnitude of above mentioned
resilience and efficiency
improvements of \pan{s}.

In this context, we observe that \pan{} architectures 
enable new types of interconnection agreements that
are not possible in today's Internet.
Nowadays, interconnection agreements are heavily influenced
by the Gao--Rexford conditions 
(henceforth: GRC)~\cite{gao2001stable,cittadini2010assigning},
which prescribe that traffic from peers and 
providers must not be forwarded 
to other peers or providers.
It is important to distinguish 
between two aspects of the GRC 
which refer to independent concerns: 
a stability aspect and an economic
aspect. Regarding stability, the GRC provably 
imply route convergence of the Border Gateway Protocol (BGP)~\cite{gao2001stable},
and from an economic perspective,
the GRC signify that an AS only
forwards traffic if the cost of forwarding
can be directly recuperated from customer ASes or end hosts.

\begin{figure}
    \centering
    \begin{tikzpicture}[scale=2]
	
	\node[asnode] (A) at (.5,  1) {$A$};
	\node[asnode] (B) at (2.5,  1) {$B$};
	\node[asnode] (C) at (1.5, .5) {$C$};
	\node[asnode] (D) at (1,  0) {$D$};
	\node[asnode] (E) at (2,  0) {$E$};
	\node[asnode] (F) at (3,  0) {$F$};
	\node[asnode] (G) at (4,  0) {$G$};
	\node[asnode] (H) at (1, -.75) {$H$};
	\node[asnode] (I) at (2, -.75) {$I$};
	
	\foreach \prov/\cust in {A/C, A/D, B/E, B/F, B/G, D/H, E/I} \draw[->] (\prov) to (\cust);
	\foreach \p/\q in {A/B, C/D, C/E, D/E, E/F, F/G} \draw[dashed] (\p) to (\q);
	
	\draw[-,red,thick] (A.south east) .. controls (D) .. (E.north west);
	\draw[-,red,thick] (B.south west) .. controls (E) .. (D.north east);
	\draw[-,red,thick] (D.south east) -- (F.south west);
	\node[red] (a1) at (1.8, .5) {$a$};
	
	\draw[dashdotted,blue,thick] (A.south east) to[spline through={(D.east)(2,-0.2)}] (F.200);
	\draw[dashdotted,blue,thick] (E.north east) -- node[above,pos=.2] {$a'$} (G.north west);

\end{tikzpicture}
    \caption{AS topology with interconnection agreements $a$ and $a'$ (discussed in~\cref{sec:model:interconnection}). Peering links are shown as dashed lines, provider--customer links as ``provider $\rightarrow$ customer''.}
    \label{fig:model:new-examples}
\end{figure}
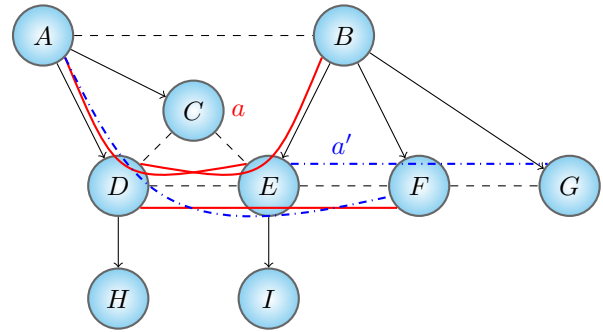

However, \pan{} architectures no longer require the GRC for providing stability. 
While paths in \pan{} architectures are discovered similarly as in
BGP, namely by communicating path information to neighboring ASes, 
data packets are forwarded along a path
selected by the packet source, which is embedded in the packet headers.
Thus, \pan{} architectures trivially solve convergence issues in the
sense of achieving a consistent view of the used forwarding
paths, as we will explain in~\cref{sec:convergence}. 

\pan{} architectures therefore
present the exciting opportunity to create
and use GRC-violating
paths---if such
paths can be made economically 
viable.
In particular, we observe that \pan{} architectures may
no longer require the GRC
for reasons of stability, but must still
respect the economic logic that
makes the GRC a rational forwarding 
policy. For example, while the creation
of GRC-violating path~$ADE$ by~$D$ in \cref{fig:model:new-examples} 
may not lead to convergence problems, the path
still is economically undesirable for~$D$,
because~$D$ would incur a charge from its 
provider~$A$ for forwarding traffic of~$E$,
which it cannot recuperate due to~$E$'s status as a 
peer. 

In this paper, we tackle this challenge
by proposing new interconnection agreements based on 
\emph{mutuality}, a concept that is 
already present in peering agreements today, 
but can be leveraged to set up more
complex and flexible agreements. Concretely, mutuality means that  the
mentioned example path~$ADE$ could be rendered economically viable
for $D$ by requiring a \textit{quid pro quo} from $E$,
the main beneficiary of the path.
For example, $E$ could offer path~$DEB$ to $D$
such that both~$D$ and~$E$ could save transit
cost for accessing ASes~$B$ and~$A$, respectively,
but incur additional transit cost for
forwarding their peer's traffic to their respective
provider. Moreover, ASes~$D$ and~$E$ might
as well provide each other with access to their
peers $C$ and~$F$, thereby saving
transit cost while experiencing additional load on
their network. 
If the flows over the new path segments
are properly balanced and especially if 
the new path segments allow ASes~$D$ and~$E$
to attract additional revenue-generating traffic from
their customers, such unconventional agreements
can be mutually beneficial. Hence, \pan{s}
offer opportunities for profit maximization
which are not present in today's Internet.

Concluding mutuality-based
agreements affects revenue and cost of the AS parties
in many ways, which requires a careful structuring of such agreements. We envisage that such agreements contain conditions
that must be respected in order to preserve the 
positive value of the agreement for both parties.
To this end, we further present a formal model of
AS business calculations and AS interconnections
that allows to derive two different types of agreement
conditions, namely conditions based on flow volumes
and conditions based on cash compensation.
Moreover, we show how to shape mutuality-based 
agreements to maximize the utility (i.e., the profit) 
obtained by both 
parties, and how
to negotiate them
efficiently.

Finally, we investigate the effect of mutuality-based
agreements on path diversity, 
building on a combination of several 
publicly available datasets~\cite{caida,caidaprefix,caidageo,geolite2}.
Our results underpin the benefits of mutuality-based agreements, 
which provide ASes access to thousands of additional paths,
many of which are considerably more attractive
regarding latency and bandwidth
than the previously available paths.
We believe that the
the highly beneficial
agreements investigated
in this paper
could represent a catalyst
to overcome the still
limited deployment of
PAN architectures.
\section{The Relevance of GRC for BGP and PANs}
\label{sec:convergence}

To clarify why the GRC are needed in a BGP/IP-based Internet but not
in \pan{} architectures,
we compare their convergence requirements using the example 
topology of~\cref{fig:model:new-examples}.

The fundamental issue with convergence in BGP
is the next-hop principle:
ASes can only select a next-hop AS for their traffic
and thus rely on that AS to forward the traffic
along the route that was originally communicated via BGP. 
If this assumption is violated%
---even temporarily---%
routing loops can arise. Put differently, in a BGP/IP-based 
Internet, all ASes need to share a common view of the
used forwarding paths.

Now suppose that ASes~$D$ and~$E$ forwarded routes from
their respective providers $A$ and~$B$ to each
other, which violates the GRC.
Assuming both $D$ and $E$ prefer routes learned from peers,
this results in a (slightly extended) instance of the classical DISAGREE
example~\cite{griffin1999analysis}, which does converge with BGP but non-deterministically. The non-determinism of such topologies, which are also known as ``BGP wedgies''~\cite{rfc4264}, is undesirable, but it does not
constitute a fundamental problem for convergence in BGP.
However, adding a single additional AS $C$, which
concludes similar agreements with both $D$ and $E$,
this topology leads to the famous BAD GADGET,
which has been shown to cause persistent route
oscillations~\cite{griffin1999analysis}.

This susceptibility to oscillations
is also worrisome because
seemingly benign topologies and policies may easily
reduce to the BAD GADGET in case one network link fails~\cite{griffin1999analysis}.
This shows that GRC-violating policies need to be implemented
very carefully and with coordination among all involved parties
to ensure routing stability
(e.g., using BGP
communities). 
As a consequence, ``sibling'' agreements
in which two ASes provide each other access to their respective providers
(as presented above) generally only exist between ASes controlled
by a single organization.

Unlike IP, \pan{s} forward a packet along the path encoded in its
header. Thus, there is no uncertainty about the traversed forwarding
path after the next-hop AS and routing loops
can be prevented. For example, if a source in $D$
would encode path~$DEBA$ in packets sent to a receiver in $A$,
$E$ would not send these packets back to $D$.
Precautions like the GRC are therefore not required for stability
in a \pan{} and ASes have substantially more freedom
in deciding which interconnection agreements to conclude and which paths
to authorize.

\section{Modelling Interconnection Economics}
\label{sec:model}

In this section, we describe our model of the economic
interactions between ASes in the Internet, which
allows to derive quantitative conditions that must be
fulfilled by interconnection agreements.

\subsection{AS Business Calculation}
\label{sec:model:business}

We model the Internet as a 
mixed graph~$\mathcal{G} = (\mathcal{A}, \mathcal{L}_{\leftrightarrow}, \mathcal{L}_{\uparrow} )$, 
where the nodes $\mathcal{A}$ correspond to ASes, the undirected 
edges $\mathcal{L}_{\leftrightarrow}$ correspond to peering links, and
the directed edges~$\mathcal{L}_{\uparrow}$
correspond to provider--customer
links. An edge $(X,Y)\in\mathcal{L}_{\uparrow}$ corresponds to a link from 
provider $X$ to customer $Y$. 
An AS node $X \in \mathcal{A}$ is connected to a set of 
neighbor ASes that can be decomposed into a provider 
set~$\pi(X)$, a peer set $\varepsilon(X)$, and a customer 
set~$\gamma(X)$.
For simplicity of notation,
we denote the customer
end-hosts of $X$ as
a virtual stub $\Gamma_X
\in \gamma(X)$, connected
over a virtual provider--customer
link~$\ell'$.

Each provider--customer link~$\ell = (X,Y) \in \mathcal{L}_{\uparrow}$ has a 
corresponding pricing function~$p_{\ell}(f_{\ell})$,
yielding the amount of money $X$ receives from $Y$
given flow volume~$f_{\ell}$ on link~$\ell$. This flow
volume~$f_{\ell}$ can
be interpreted as is
appropriate for the pricing
function, e.g., as the
median, average, or 95th
percentile of traffic volume
over a given time period.
Each pricing function~$p_{\ell}(f_{\ell})$ is of the
form~$\alpha_{\ell}f_{\ell}^{\beta_{\ell}}$,
where $\alpha_{\ell} \geq 0$ and~$\beta_{\ell} \geq 0$ are 
pricing-policy parameters. 
For example, $\beta_{\ell} = 0$ corresponds
to flat-rate pricing with flow-independent fee 
$\alpha_{\ell}$, $\beta_{\ell} = 1$ corresponds to
to pay-per-usage pricing with traffic-unit 
cost~$\alpha_{\ell}$, and $\beta_{\ell} > 1$ results in a 
superlinear pricing function, e.g., as given in congestion
pricing. For simplicity, we assume that all peering links~$\ell' \in \mathcal{L}_{\leftrightarrow}$
are settlement-free,
as usual in the 
literature~\cite{huston1999interconnection}. 
Paid-peering links
can be represented in the model as
provider--customer links. We write~$p_{XY} = p_{(X,Y)}$ for brevity.

In addition to charges defined by pricing functions,
an AS~$X$ incurs an internal cost according to an
internal-cost function $i_{X}(f_{X})$, which is non-negative and monotonously increasing in
the flow~$f_X$ through $X$. 
Furthermore, let the flow $f_{XY}$ be the share of $f_{X}$
that also flows directly to or from its neighbor $Y$.
These sub-flows are 
represented in 
vector~$\mathbf{f}_{X}$, i.e., 
$(\mathbf{f}_{X})_Y = f_{XY}$.
Analogously, $f_{XYZ}$ is the flow volume
on the path segment consisting of ASes~$X$, $Y$, and~$Z$ 
in that order, independent of direction. 

The utility (or profit) $\mathcal{U}_{X}(\mathbf{f}_{X}) =r_{X}(\mathbf{f}_X) - c_X(\mathbf{f}_X)$ of an AS $X$ is the difference between the revenue~$r_{X}(\mathbf{f}_X)$
obtained and the costs~$c_X(\mathbf{f}_X)$ incurred by traffic distribution~$\mathbf{f}_{X}$: 
\begin{subequations}%
    \begin{align}
        r_{X}(\mathbf{f}_X) &= \sum_{Y \in \gamma(X)} p_{XY}(f_{XY}),\\
        c_{X}(\mathbf{f}_X) &= i_X(f_X) + \sum_{Y \in \pi(X)} p_{YX}(f_{XY}).
    \end{align}
\end{subequations}

This simple model can already formalize some important
insights. For example, consider ASes~$A$, $D$, and $H$
in \cref{fig:model:new-examples}, 
connected by provider--customer
links~$(A,D)$ and $(D,H)$. For $D$ to make
a profit, i.e., $\mathcal{U}_{X}(\mathbf{f}_{X})>0$, it must hold that~$r_{D}(\mathbf{f}_{D}) > c_{D}(\mathbf{f}_D)$.
This in turn implies $p_{DH}(f_{DH}) + p_{D\Gamma_D}(f_{D\Gamma_D}) > p_{AD}(f_{AD}) + i_D(f_D)$, i.e., the revenue from $H$ and
$D$'s customer end-hosts must 
cover the cost induced by charges from $A$ as well as
internal cost.

\subsection{Interconnection Agreements}
\label{sec:model:interconnection}

We denote an interconnection agreement~$a$ 
between two ASes~$X$
and~$Y$ in terms of the respective neighbor
ASes to which
$X$ and $Y$ gain new paths thanks
to the agreement: \begin{equation}
    a = \big[X(\uparrow\pi'_X, \rightarrow\varepsilon'_X,\downarrow\gamma'_X); Y(\uparrow\pi'_Y, \rightarrow\varepsilon'_Y,\downarrow\gamma'_Y)\big]
\end{equation} Here $\pi'_X \subseteq \pi(X)$,
$\varepsilon'_X \subseteq \varepsilon(X)$ and
$\gamma'_X \subseteq \gamma(X)$ are the providers,
peers, and customers of AS $X$, respectively, 
to which $Y$ obtains access through the agreements
(analogous for $\pi'_Y$, $\varepsilon'_Y$, $\gamma'_Y$).
Furthermore, we introduce the notation
$a_X = \pi'_X \cup \varepsilon'_X \cup \gamma'_X$, 
and an analogous notation for $Y$.

Next, we formalize the utility of interconnection 
agreements.
Let the utility~$u_X(a)$ of agreement~$a$ to $X$
be the difference in $\mathcal{U}_{X}$ produced by
changes in flow composition due to agreement~$a$, i.e., 
\begin{equation}u_X(a) = \mathcal{U}_{X}(\mathbf{f}_{X}^{(a)}) - 
\mathcal{U}_{X}(\mathbf{f}_{X}) =  \Delta r_X - \Delta c_X,\end{equation} where $\mathbf{f}_{X}^{(a)}$ is the 
distribution of traffic passing 
through $X$ if agreement~$a$ 
is in force, and~$\Delta r_X$ 
and $\Delta c_X$ are 
agreement-induced changes in 
revenue and cost of $X$, 
respectively.

\subsubsection{Example of Peering Agreement}
\label{sec:model:interconnection:peering}

Consider the negotiation of a
classic peering agreement between 
ASes~$D$ and~$E$ in \cref{fig:model:new-examples},
which so far have been connected by 
providers $A$ and~$B$. 
We assume that ASes~$D$ and~$E$
are pure transit ASes, i.e.,
there are no customer end-hosts
within these ASes.
In a classic peering agreement,
both ASes provide each other access to all of their
respective customers. Using the notation introduced
above, this agreement is formalized as 
$a_p = [D(\downarrow\{H\});E(\downarrow\{I\})]$.
The change in revenue for $D$,
\begin{equation}\Delta r_D = p_{DH}(f_{DH}^{(a_p)}) - p_{DH}(f_{DH}),
\end{equation}
results from changes in flows to $D$'s customer $H$,
driven by the new peering link~$\ell'(D,E)$. The changes
in cost to $D$, 
\begin{equation}
\begin{split}
\Delta c_D &= 
i_D(f_D^{(a_p)}) - i_D(f_D) \\
&\hphantom{{}={}}+ p_{AD}(f_{AD}-f_{DABE}) - p_{AD}(f_{AD}),
\end{split}
\end{equation} 
result from changes in internal and provider 
cost. The utility of a peering agreement
to $D$ is then 
$u_D(a_p) = \Delta r_D - \Delta c_D$. The strongest 
rationale for peering agreements is that the agreement 
leads to considerable
cost decrease, i.e., a strongly negative~$\Delta c_D$,
as provider $A$ can be avoided for any traffic
$f_{DE}$. The new peering link may also attract
additional traffic from customer $H$ (e.g., due to the lower latency of the new connection),
thus increasing $D$'s revenue. 
If $\Delta r_D > \Delta c_D$,
agreement~$a_p$ has positive utility $u_D(a_p)>0$ and is
worth concluding from $D$'s perspective.
However, $D$ may also experience a substantial increase
in internal cost ($\Delta i_D$) due to peering, with
little savings in provider cost and no extra income 
from the additionally attracted traffic (e.g., due to flat-rate 
fees). In such a case, $u_D(a_p)$ is negative, and the
agreement is not attractive to $D$. 
For agreement~$a_p$ to be concluded, both~$u_D(a_p)$
and~$u_E(a_p)$ need to be non-negative (or, if
cash transfers as in
paid 
peering~\cite{zarchy2018nash} 
are used,
$u_D(a_p) + u_E(a_E)$ would
need to be non-negative
such that one party could
compensate the other party
and still benefit from
the agreement).

\subsubsection{Example of Novel Mutuality-Based Agreement}
\label{sec:model:interconnection:mutuality}

As discussed in \cref{sec:convergence}, the GRC are not necessary for stability in a \pan{}, 
which allows for new types of interconnection agreements.
In the example topology of \cref{fig:model:new-examples}, 
the following agreement~$a$ could
not be concluded in today's BGP-based Internet,
but could be concluded in a path-aware inter-domain
network: \begin{equation}a = [D(\uparrow\{A\});E(\uparrow\{B\},\rightarrow\{F\})]\end{equation} In this agreement~$a$, $D$ offers
$E$ access to its provider $A$, whereas $E$
in return provides $D$ with access to its provider 
$B$ and its peer $F$.

The agreement utility~$u_D(a)$ of agreement~$a$
to $D$ can be derived similar to the peering-agreement
example in \cref{sec:model:interconnection:peering}. Namely,
the changes in revenue and cost of $D$ are
\allowdisplaybreaks
\begin{subequations}%
    \begin{align}%
        \Delta r_D &= \sum_{X \in \gamma(D)} p_{DX}(f_{DX}^{(a)}) - p_{DX}(f_{DX}),\\
        \begin{split}
        \Delta c_D &=
                    i_D(f_D^{(a)}) - i_D(f_D)
                    + \smash{\smashoperator{\sum_{{Y \in \pi(D)}}}} p_{YD}(f_{DY}^{(a)}) - p_{YD}(f_{DY}),
        \end{split}
    \end{align} where
    \begin{equation}
        f_{DY}^{(a)} = f_{DY} + f_{EDY}^{(a)} - \sum_{Z\in a_{E}} f_{DY}^{\updownarrow}(Z,E)
    \end{equation}
\end{subequations} is the flow from $D$ to one of its
providers, $Y$, after conclusion of the agreement.
This flow towards the provider is increased 
by the traffic $f_{EDY}$ that $D$ transfers
to $Y$ for $E$ in accord with the agreement.
Simultaneously, flow $f_{DY}$ is decreased by the flow~
$f_{DY}^{\updownarrow}(Z,E)$ to the destinations $Z \in a_{E}$ 
that was previously forwarded via
provider $Y$, but is newly forwarded over $E$
thanks to agreement~$a$. As the new paths over agreement
partner $E$ are the reason for newly attracted traffic 
from $D$'s customers, all such newly attracted 
traffic is forwarded over the agreement partner, not over 
$D$'s providers.

Clearly, agreement~$a$ is not per se attractive
to ASes~$D$ and~$E$. For $D$, the higher the amount of 
flow from $E$ that newly must be forwarded to a provider
AS, the less attractive the agreement to $D$, i.e., 
the higher~$\Delta c$. In contrast, the higher the
amount of flow offloaded to $E$, the higher
the utility that $D$ can derive from agreement~$a$.
Vice versa, the agreement utility for $E$ 
is conversely affected by the size of these new flows.
Thus, the agreement~$a$ must be qualified. Necessarily,
these qualifications must guarantee positive agreement 
utility to both parties. Furthermore, it is desirable
that the qualifications achieve Pareto-optimal~\cite{debreu1954valuation} and fair
agreement utility, i.e.,
no party's utility could be increased without
decreasing the other party's utility, and the utility obtained by both parties
is as similar as possible. 
In \cref{sec:optimal},
we propose two different types of agreement
qualifications to achieve these goals.

\subsubsection{Extension of Agreement Paths}
\label{sec:model:extension}
Thanks to
agreement~$a$, ASes~$D$ and~$E$ obtain access to
the new path segments~$DEB$ and~$DEF$ ($D$) and $EDA$ ($E$). 
As the motivation behind the agreement
is the attraction of additional customer
traffic, the agreement parties would provide access
to the new path segments only to their respective 
customers. For example, $D$ would extend the new path 
segment~$DEB$ to~$HDEB$, but not to~$ADEB$ or~$CDEB$.

However, the new path segments can themselves 
become the matter of other agreements. For example,
in an agreement~$a'$ between $E$ and $F$,
$E$ could provide $F$ with access to path 
segment~$EDA$ if $F$ in return provides access
to its peer~$F$. Note that agreement~$a'$
must be negotiated such that the conditions defined
in agreement~$a$ can still be respected, as these
agreements are interdependent.

\section{Optimization
of Mutuality-Based
Agreements}
\label{sec:optimal}

The novel
mutuality-based agreements should achieve
Pa\-reto-op\-ti\-mal and fair utility 
in order to be attractive to both agreement
parties. Moreover, a necessary
economic condition for conclusion of the agreement is
the guarantee of non-negative agreement utility for 
both parties. Hence, defining an optimal interconnection
agreement between two ASes~$D$ and~$E$ 
corresponds to solving the nonlinear program 
\begin{equation}
\begin{array}{ll@{}ll}
\text{maximize}  & \displaystyle u_D(a) \cdot u_E(a) &\\
\text{subject to}& \displaystyle u_D(a) \geq 0, \quad &u_E(a) \geq 0,
\label{eq:optimal:general-program}
\end{array}
\end{equation} where the objective is given by
the Nash 
product~\cite{nash1950bargaining,binmore1986nash}, which
is only optimized for Pareto-optimal and fair
values of~$u_D(a)$ and~$u_E(a)$. Hence,
if the Nash bargaining product is optimized,
no party can increase its utility without decreasing
the other party's utility, and the utility of both
parties is as similar as possible.

In the following, we present two methods to solve
the nonlinear program 
in~\cref{eq:optimal:general-program}, i.e., two
ways to qualify agreement~$a$ such that the constrained
optimization problem is solved: The optimization
method in~\cref{sec:optimal:target-flow} is based
on defining flow-volume targets, which offers
better predictability,
whereas the method in 
\cref{sec:optimal:cash-compensation} is based
on cash transfers between the agreement parties,
which offers more 
flexibility.

\vspace*{2mm}
\subsection{Optimization via Flow-Volume Targets}
\label{sec:optimal:target-flow}

The constrained optimization problem 
in~\cref{eq:optimal:general-program} can be solved
by determining volume limits for the flows
that traverse the new path segment created by agreement~$a$. 
Concretely,
the general optimization problem can be instantiated
by the nonlinear program
\begin{align}
&\text{max} \quad \displaystyle u_D(\mathbf{f}^{(a)}, \Delta \mathbf{f}^{(a)}) \cdot u_E(\mathbf{f}^{(a)}, \Delta \mathbf{f}^{(a)})\notag\\
&\text{s.t.} \quad \displaystyle \Delta r_D(\mathbf{f}^{(a)}, \Delta \mathbf{f}^{(a)}) \geq \Delta c_D(\mathbf{f}^{(a)}, \Delta \mathbf{f}^{(a)}) & \text{(I-D)}&\notag\\ 
&\quad \forall X \in a_{E}.\ f_{DEX}^{(a)} \geq \textstyle\sum_{Z \in \gamma(D)} \Delta f_{ZDEX}^{(a)} & \text{(II-D)}&\notag\\
&\quad \forall X \in a_E.\ \forall Z \in \gamma(D). \  \Delta f_{ZDEX}^{(a)} \leq \Delta f_{ZDEX}^{\max} & \text{(III-D)}&\notag\\
&\quad + \text{ constraints (I-E), (II-E), (III-E) where \rlap{$D\leftrightarrow E$.}}
\label{eq:optimal:target-program}%
\end{align}
Here, 
$f^{(a)}_P$ refers to the
total flow volume on a new path segment~$P$ allowed by the
agreement, and~$\Delta f_P^{(a)}$ is the volume of newly attracted customer traffic
on path segment~$P$ after agreement conclusion.
Hence, the flow volume
on new path segment~$P$ that consists of rerouted
existing traffic is at most $f^{(a)\updownarrow}_P = f^{(a)}_P - \Delta f^{(a)}_P$.

The constraints (I-D) and (I-E) capture the fact
that the agreement must be economically viable
for both parties. The constraints (II-D) and (II-E)
capture the requirement that all the agreement-induced additional traffic from customers
has to be accommodated within the flow
allowances defined in the agreement.
Finally, as any agreement could be made viable
by attracting enough additional customer
traffic, the constraints (III-D) and (III-E)
express that there is a limit $\Delta f_{P}^{\max}$
to customer demand for new path segment~$P$.
The optimization problem can be solved
by appropriately adjusting~$f^{(a)}$, i.e.,
the total allowance for flows on the new path 
segments, and~$\Delta f^{(a)}$, i.e., the amount
of additionally attracted customer traffic on the new
path segments. The resulting~$f^{(a)}$
can then be included into the agreement as
flow-volume targets, whereas the 
resulting~$\Delta f^{(a)}$ can be used by each AS
to optimally allocate the flow-volume
allowance among its customers. If agreement
paths are extended as
discussed in~\cref{sec:model:extension},
additional constraints
may hold; in this paper,
we do not investigate
these constraints.

\subsection{Optimization via Cash Compensation}
\label{sec:optimal:cash-compensation}

An alternative optimization method to fixing
flow-volume targets is given by a non-technical
approach that is based on cash transfers
between the agreement parties. The idea
of such an agreement structure is to abstain
from limiting flow volumes, but to agree upon
a cash payment~$\pi$ for compensating the party
that benefits less or even stands to lose
from the agreement. Formally, negotiating
an agreement between ASes~$D$ and~$E$ is
equivalent to defining a cash sum
$\Pi_{D\rightarrow E}$ from $D$ to $E$ (for 
negative~$\Pi_{D\rightarrow E}$,
$E$ pays $D$)
that solves the optimization problem
\begin{equation}
\begin{array}{@{}ll@{}ll}
\text{max}  & \displaystyle \big(u_D(a) - \Pi_{D\rightarrow E}\big) \big(u_E(a) + \Pi_{D\rightarrow E}\big) &\\
\text{s.t.}& \displaystyle u_D(a) - \Pi_{D\rightarrow E} \geq 0, \quad u_E(a) + \Pi_{D\rightarrow E} \geq 0.
\label{eq:optimal:cash-program}
\end{array}%
\end{equation}
In negotiation, the
utilities $u_D(a)$ and~$u_E(a)$ are estimated
based on the expected volume of the newly enabled
flows.

The optimization problem in~\cref{eq:optimal:cash-program}
has a solution if and only if $u_D(a) + u_E(a) \geq 0$,
i.e., one party gains at least as much as the other
party loses and can thus compensate the losing party
while still benefiting from the agreement.
If $u_D(a) + u_E(a) \geq 0$, \cref{eq:optimal:cash-program}
is always solved based on the
Nash Bargaining Solution~\cite{nash1950bargaining}:
\begin{equation}\Pi_{D\rightarrow E} = 
u_{D}(a) - \frac{u_D(a) + u_E(a)}{2}.\end{equation}

\subsection{Comparison of Optimization Methods}
\label{sec:optimal:comparison}

The main advantage of flow-volume targets
over cash transfers is their higher predictability:
As flow-volume agreements allow the agreement parties
to enforce volume limits, they are
more likely to guarantee positive agreement utility
than cash-compensation agreements. The latter depend
on ex-ante estimates of newly attracted
customer traffic that might
be incorrect, in which case the
stipulated cash sum might not respect the 
constraints in~\cref{eq:optimal:cash-program}. 

Besides being easier to compute, an important advantage of 
cash compensation over volume targets is
its larger flexibility, which
translates into higher probability of the agreement
being concluded as well as higher achievable 
joint utility. 
In certain settings where the revenues and costs of two ASes
are very dissimilar, the flow-volume optimization
problem in~\cref{eq:optimal:target-program}
has a solution where all flow-volume targets are
zero, i.e., the agreement cannot be concluded.
In contrast, a cash-compensation agreement can always be concluded
as long as the joint utility is positive.

A common difficulty of both agreement structures
is that they depend on private information
of the negotiating parties, namely
the charges from their respective providers,
their internal forwarding cost, and the pricing
for their customers, which determine
the utility each party derives from the agreement.
It cannot be assumed that the parties
are willing to truthfully reveal this private information,
as false claims about the
cost structure strengthen a party's bargaining
position. In~\cref{sec:negotiation},
we show how the inefficiency arising from
such private information can be limited
by means of a bargaining mechanism.
\section{Mechanism-Assisted Negotiation}
\label{sec:negotiation}

In this section, we present a bargaining
mechanism that we have
designed to allow two interested parties
to negotiate a mutuality-based interconnection
agreement in an automated fashion, 
while reducing the negotiation
inefficiency arising from bargaining
under private information. However,
while
there are considerable advantages to using
such a bargaining mechanism, there is no
inherent necessity to use it; 
mutuality-based agreements might as well
be negotiated by classic offline negotiations
similar to classic peering agreements.

\subsection{Problem Statement}
\label{sec:negotiation:problem}

When negotiating a mutuality-based
interconnection agreement~$a$, the agreement
parties~$X$ and~$Y$ must agree on flow-volume
targets or a cash sum transferred
between the parties. The core difficulty
of such negotiations is that the
determination of the agreement conditions
relies on~$u_X(a)$ and~$u_Y(a)$, i.e.,
the amount of utility that
either party derives from the agreement,
which is unknown to the respective
other party.
The presence of such private
information allows each party to
falsely report a lower agreement utility
than it really obtains, which leads
to more favorable terms of the agreement
for the dishonest party. For example,
when negotiating a cash-compensation
agreement, the after-negotiation utility~$\overline{u}_X$ 
of party~$X$ is determined by
\begin{equation}
\overline{u}_X = u_X - \Pi_{X\rightarrow Y} = u_X - \frac{v_X - v_Y}{2},
\end{equation} where~$v_X$ and~$v_Y$
are the values of the utility which~$X$
and~$Y$ \emph{claim} to obtain from the agreement
and which are used for determining the 
cash-compensation sum~$\Pi_{X\rightarrow Y}$.
To simplify our notation, we drop the reference
to $a$ here and in the remainder of the section,
as we always consider a single agreement.
Clearly, party~$X$ can 
increase~$\overline{u}_X$ by 
decreasing~$v_X$, i.e., its utility claim.
However, if
both parties follow such a dishonest
strategy, the apparent utility 
surplus~$v_X + v_Y$
of the agreement tends to become
negative, in which case the agreement
seems to be not worth concluding,
the negotiation breaks down and
both parties derive zero utility. 
Hence, the challenge in negotiating 
mutuality-based agreements 
(as for paid-peering agreements
in today's 
Internet~\cite{zarchy2018nash}) is 
the classic problem 
of \emph{non-cooperative
bilateral bargaining}~\cite{myerson1983efficient}.

In the game-theoretic literature,
this problem is tackled by \emph{mechanism
design}, i.e., by structuring the negotiation
in a way that minimizes the inefficiency
of the result. Especially
for inter-AS negotiation,
such mechanisms have the additional
advantage that they enable the automation
and mathematical characterization of
negotiations which nowadays are often informal
and risky~\cite{calvert2019vision}. In this
section, we present a bargaining mechanism 
with multiple desirable properties.
We focus on negotiating cash-compensation
agreements 
(cf.~\cref{sec:optimal:cash-compensation}); 
adapting the mechanism
for flow-volume
agreements (cf.~\cref{sec:optimal:target-flow})
is an interesting challenge for
future work.

\subsection{Desirable Mechanism Properties}
\label{sec:negotiation:properties}

Typically, desirable properties of
bilateral-bargaining mechanisms include
the following~\cite{myerson1983efficient}:
\begin{enumerate}[props]
    \item \textit{Individual rationality:}
    \label{prop:individual_rationality}
    Participation in the mechanism should be
    associated with non-negative
    utility in expectation (weak individual rationality)
    or in any outcome (strong individual rationality) for any party
    such that no party must be forced
    to take part in the mechanism.
    \item \textit{Ex-post efficiency:}
    \label{prop:ex-post_efficiency}
    The mechanism should lead to conclusion
    of the agreement if and only if the
    utility surplus is non-negative.
    \item \textit{Incentive compatibility:}
    \label{prop:incentive_compatibility}
    The mechanism should structure the negotiation
    such that it is in a party's self-interest
    to be honest about its valuation
    of the agreement.
    \item \textit{Budget balance:}
    \label{prop:budget_balance}
    The mechanism should neither require
    external subsidies nor end up with
    left-over resources (e.g., money) that 
    are not ultimately assigned to 
    the negotiating parties \cite{nath2019efficiency}.
\end{enumerate}

According to the famous
Myerson--Satterthwaite
theorem,
no mechanism can satisfy
the requirements \labelcref{prop:budget_balance,prop:individual_rationality,prop:ex-post_efficiency}
simultaneously~\cite{nachbar2017myerson,myerson1983efficient}. The
prominent Vickrey--Clarkes--Grove 
(VCG) mechanism, for example,
guarantees individual rationality (\labelcref{prop:individual_rationality}) and
ex-post efficiency (\labelcref{prop:ex-post_efficiency}), but violates
budget balance (\labelcref{prop:budget_balance}) \cite{vickrey1961counterspeculation, clarke1971multipart, groves1973incentives}.
Absent government
intervention, individual rationality
and budget balance are necessary
conditions for an inter-AS negotiation
mechanism; we therefore sacrifice perfect
ex-post efficiency and instead aim at maximizing
the Nash bargaining 
product: \begin{equation}
    \mathcal{N}(u_X, u_Y, v_X, v_Y) = \big(u_X - \Pi_{X\rightarrow Y}\big)\big(u_Y + \Pi_{X\rightarrow Y}\big)
\end{equation} if $v_X + v_Y \geq 0$, 
and 0 otherwise, where the 
cash transfer is~$\Pi_{X\rightarrow Y}=
(v_X - v_Y)/2$.

While there are bargaining mechanisms that offer
individual rationality, budget balance
and incentive compatibility (according
to the notion of Bayes--Nash incentive
compatibility (BNIC)~\cite{muller2007weak}),
there are three arguments for relaxing
the incentive-compatibility requirement
as well. First, incentive compatibility
might not be desired, because an AS might
not want to disclose its true utility
from an agreement for privacy reasons
(e.g., not to hamper its prospects
in future negotiations) and an
incentive-compatible mechanism would allow
the other party to learn the utility of
a party~$X$ from the mechanism-induced
truthful claim~$v_X = u_X$.
Second, incentive compatibility is in
general unnecessary to achieve an optimal
Nash bargaining product: For a viable agreement
(i.e.,~$u_X + u_Y \geq 0$), the Nash bargaining
product is optimized for all~$v_X$, $v_Y$
with~$v_X - u_X = v_Y - u_Y$ (i.e.,
equal dishonesty) and $v_X + v_Y \geq 0$.
Hence, while truthfulness, i.e.,
$v_X - u_X = 0 = v_Y - u_Y$, is a
sufficient condition for an optimal
Nash bargaining product, it is not
a necessary condition. Third,
while incentive compatibility can
be guaranteed with mechanisms,
this guarantee often comes at the
cost of introducing inefficiency:
For example, the randomized-arbitration
mechanism by Myerson~\cite{myerson1979incentive}
introduces a relatively high probability
of negotiation cancellation such that
even agreements with large surplus
often cannot be concluded. Counter-intuitively,
mechanisms which allow small deviations
from truthfulness might thus be more
efficient than perfectly incentive-compatible
mechanisms. In the following subsection,
we present such a mechanism.

\subsection{Bargaining in One Shot with Choice Optimization (BOSCO)}
\label{sec:negotiation:mechanism}

In this subsection, we present the BOSCO
mechanism, which we have
designed to enable automated
negotiation of inter-AS
agreements with high
bargaining efficiency.
The core idea of the BOSCO mechanism
is as follows: The negotiating parties
play a simple bargaining game
supervised by an BOSCO service, in which
each of them has a set of choices defined
by the mechanism. Each combination
of such choice sets is associated with
at least one Nash equilibrium, i.e.,
a combination of strategies in which
no party can profitably deviate from
the strategy assigned to it.
In turn, each such Nash equilibrium
can be rated with respect to a
bargaining-efficiency metric.
The benefit of the mechanism is
thus realized by the BOSCO service,
which appropriately constructs the choice
sets and picks an associated
Nash equilibrium such that a
high bargaining-efficiency results.
In the following, we will present
and formalize the components
of the mechanism.

\subsubsection{Utility Distributions}
For executing the BOSCO mechanism,
the two agreement parties~$X$ and~$Y$ 
communicate the content of a mutuality-based
agreement to an BOSCO service.
While the BOSCO service does not
know the true utility~$u_X$ and~$u_Y$
that either party derives from the
agreement, we assume (as usual
in bargaining-mechanism design) 
that the BOSCO service can estimate a
\emph{utility distribution}~$\mathbb{U}_Z(u)$,
which yields the probability that
party~$Z \in \{X,Y\}$ derives utility~$u$ from
the agreement. We envision that such
an estimation can be performed on 
the basis of heuristics, taking
standard transit and network-equipment
prices into account.

\subsubsection{Choice Sets} 
After deriving~$\mathbb{U}_Z(u)$
for each agreement party~$Z$, the BOSCO
mechanism constructs a \emph{choice set}~$V_Z$ of
possible claims for each agreement
party~$Z$. For BOSCO,
these choice sets correspond to 
finite discrete sets
with cardinality~$W_Z = |V_Z|$.
To guarantee strong rationality,
each choice set always contains
the option~$-\infty$, with which
any party can cancel the negotiation.
Moreover, let there be
an ordering~$v_{Z,1}, \dots, v_{Z,W_Z}$
on the choices such 
that~$v_{Z,i} < v_{Z,j}$
for all~$1\leq i < j \leq W_Z$.

\subsubsection{Bargaining Game}
The utility distributions
and the choice sets represent the basis of
a \emph{bargaining game}. In this bargaining
game, each party~$Z$ picks a suitable 
choice~$v_Z \in V_Z$ and commits 
it to the BOSCO service.
The BOSCO service then checks 
whether the apparent utility 
surplus is non-negative, i.e., $v_X + v_Y \geq 0$.
If yes, the mechanism 
prescribes conclusion of the agreement
with cash compensation~$\Pi_{X\rightarrow Y} = (v_X - v_Y)/2$, resulting in after-negotiation
utility $\overline{u}_X = u_X - \Pi_{X\rightarrow Y}$ and $\overline{u}_Y = u_Y + \Pi_{X\rightarrow Y}$. If not, the mechanism
cancels the negotiation, resulting
in after-negotiation 
utility~$\overline{u}_X = \overline{u}_Y = 0$.

\subsubsection{Bargaining Strategies}
In this bargaining game, the
\emph{bargaining strategy}~$\sigma_Z(u_Z)$ 
of party~$Z$ is a function which yields
a choice~$v_Z \in V_Z$ given 
the true utility~$u_Z$ of party~$Z$.
Party~$X$'s \emph{best-response strategy}~$\sigma_X^{+}$ to
party~$Y$'s strategy~$\sigma_Y(u_Y)$ 
consists of picking the 
choice~$v_X \in V_X$ with
the highest expected after-negotiation utility 
given its true utility~$u_X$, i.e.,
the choice~$v_X \in V_X$ that maximizes
\begin{equation} 
    \mathbb{E}[\overline{u}_X](u_X, v_X) = \smash{\smashoperator[l]{\sum_{\substack{v_Y \in V_Y.\\v_Y \geq -v_X}}}}\mathbb{P}\big[v_Y\big] \cdot
    \Big(u_X - \frac{v_X - v_Y}{2}\Big),
    \label{eq:negotiation:best-choice}
\end{equation} where
\begin{equation}
    \mathbb{P}\big[v_Y\big] = \int_{-\infty}^{\infty} \mathbb{U}_Y(u) \cdot [\![\sigma_Y(u) = v_Y]\!] \d{u}
\end{equation} and~$[\![P]\!]$ is 1
if statement~$P$ is true and 0 otherwise. 
Interestingly, the expected after-negotiation
utility given a choice~$v_X \in V_X$ is
a linear function~$m_X(v_X)u_X + q_X(v_X)$
of the true utility~$u_X$, with
\begin{align}
    m_X(v_X) &= \mathbb{P}[\sigma_Y(u_Y) \geq -v_X],\label{eq:negotiation:m}\\
    q_X(v_X) &= \smashoperator[l]{\sum_{v_Y \geq -v_X}} \mathbb{P}[\sigma_Y(u_Y) = v_Y] (v_Y - v_X)/2 ,
\end{align} where it holds
that~$m_X(v_X) \geq m_X(v_X')$
for all $v_X' < v_X$ by nature
of the CCDF in~\cref{eq:negotiation:m}.
For brevity, we 
write~$m_{X,i} = m_X(v_{X,i})$
(analogous for~$q_X$).
This linear formulation
allows an easy computation
of the best-response strategy:
A choice~$v_{X,i}$ is the best choice given
true utility~$u_X$ if it holds that
$m_{X,i}u_X + q_{X,i} \geq
m_{X,j}u_X + q_{X,j}$ 
for all~$j \neq i$.
If there exists~$j$ with~$m_{X,j} =
m_{X,i}$ and~$q_{X,j} > q_{X,i}$,
the choice~$v_{X,i}$ is not optimal for
any~$u_X$.
Otherwise, $v_{X,i}$ is the best choice
for~$u_X$ only if
\begin{equation}
    u_X \geq \smash{\frac{q_{X,j^-} - q_{X,i}}{m_{X,i} - m_{X,j^-}}} =: I(i, j^-)
\end{equation}
for all~$j^- \in J^-(i) =
\{j\ |\ j<i \land m_{X,j}\neq m_{X,i}\}$.
Let~$u_X^-(i) = \max_{j^{-} \in J^{-}(i)} I(i, j^{-})$ 
if~$|J^{-}(i)| > 0$,
and~$-\infty$ otherwise. 
Symmetrically,~$v_{X,i}$ is the best choice
for~$u_X$ only if~$u_X \leq I(i,j^+)$ 
for all~$j^+ \in J^{+}(i)$,
which is a set defined analogously
to~$J^{-}(i)$. Also analogously,
let~$u_X^+(i) = \min_{j^{+} \in J^{+}(i)} I(i, j^{+})$ if
$J^{+}(i)$ is non-empty, and
$\infty$ otherwise. Then,
all~$u_X$ for which~$v_{X,i}$ is the
best choice lie in the 
range~$[u_X^{-}(i), u_X^{+}(i)]$. 
Hence, the best-response 
strategy~$\sigma_X^+$ is defined
by a series of
thresholds~$\{t_{X,i}\}_{i \in \{1, ..., W_X+1\}}$,
where~$\sigma_X^+(u_X) = v_{X,i}$
if~$u_{X} \in [t_{X,i},t_{X,i+1})$.
\Cref{alg:best-response} shows
how to compute this threshold
series. The loop
at \cref{alg:thresholds.loop} is needed to generate
intervals which are disjoint,
but cover all possible values.

\begin{algorithm}[t!]
    \caption{Best-response computation for
    party~$X$.}
    \label{alg:best-response}
    \small
    \begin{algorithmic}[1]
        \Procedure{ComputeBestResponse}{$\{(m_{X,i}, q_{X,i})\}_i$}
        \State $t_{X,1} \gets -\infty$
        \State $t_{X,i} \gets \infty\  \forall i \in \{2, ..., W_X +1\}$
        \State $i \gets 1$
        \While{$|J^+(i)| > 0$}
            \State $i^+ \gets \arg\min_{j^+ \in J^{+}(i)} I(i,j^+)$
            \State $t_{X,i^+} \gets I(i,i^+)$
            \State $i \gets i^+$
        \EndWhile
        \For{$i = 1$ \textbf{to} $W_X$}\label{alg:thresholds.loop}
            \If{$t_{X,i} = \infty$}
                \State $t_{X,i} \gets \min_{j > i} t_{X,j}$
            \EndIf
        \EndFor
        \Return{$t_X$}
        \EndProcedure
    \end{algorithmic}
\end{algorithm}

\subsubsection{Nash Equilibria}
A \emph{Nash equilibrium}~$\sigma^{\ast} = (\sigma^{\ast}_X,
\sigma^{\ast}_Y)$ in the bargaining game
is a set of two bargaining strategies,
each of which is the best-response
strategy to the other strategy.
An equilibrium can be computed
by assuming arbitrary~$\sigma_X$ 
and~$\sigma_Y$
(defined by arbitrary threshold
series~$t_X$ and~$t_Y$) and
then compute
the best-response strategies in
an alternating fashion until
the best-response strategy of
any party is their existing
strategy. While it can
be shown that the considered
bargaining game is not a
potential 
game~\cite{monderer1996potential} (which would guarantee
convergence to an equilibrium
by alternating unilateral
optimization), the
best-response dynamics always
converged to an equilibrium
in our diverse simulations.

\subsubsection{Bargaining Efficiency}
Given a Nash equilibrium, a natural question
arises concerning the efficiency of such
an equilibrium~$\sigma^{\ast}$. Clearly, if the BOSCO
service knew~$u_X$ and~$u_Y$, it could
simply compute the associated Nash
bargaining 
product~$\mathcal{N}(u_X, u_Y, \sigma^{\ast}_X(u_X), \sigma^{\ast}_Y(u_Y))$ and compare
it to the optimal Nash bargaining 
product
$\mathcal{N}(u_X, u_Y, u_X, u_Y)$ that
arises under universal truthfulness.
However, as the BOSCO service has only
probabilistic knowledge about the true
utility of the agreement,
it must evaluate the
efficiency of an equilibrium~$\sigma^{\ast}$ 
by computing the the \emph{expected
Nash bargaining 
product}~$\mathbb{E}\left[\mathcal{N}\middle|\sigma^{\ast}\right]$ for this strategy, which is
\begin{equation}
    \smash{\iint_{-\infty}^{\infty}} \mathbb{U}(u_X,u_Y)\mathcal{N}\big(u_X,u_Y,\sigma^{\ast}_X(u_X), \sigma^{\ast}_Y(u_Y)\big) \d{u_Y}\d{u_X}
\end{equation} where~$\mathbb{U}$ is the
joint utility distribution for parties~$X$ and~$Y$.
The optimal expected Nash bargaining
product is given by~$\mathbb{E}\left[\mathcal{N}\middle|\sigma^{\top}\right]$ 
where $\sigma_Z^{\top}(u_Z) = u_Z$ is
the truthful strategy for party~$Z$.
Similar to a Price of Anarchy formulation~\cite{roughgarden2015intrinsic},
we thus formalize the efficiency of
an equilibrium with the following
metric of \emph{Price of Dishonesty} ($\mathit{PoD}$):
\begin{equation}
    \mathit{PoD}(\sigma^{\ast}) = 1 - \frac{\mathbb{E}\left[\mathcal{N}\middle|\sigma^{\ast}\right]}{\mathbb{E}\left[\mathcal{N}\middle|\sigma^{\top}\right]}
\end{equation} which is always in $[0,1]$
for reasons laid out in~\cref{sec:negotiation:provable-properties}. Note that~$\mathit{PoD}$
is undefined 
if~$\mathbb{E}[\mathcal{N}(\sigma^\top)] = 0$, i.e., if the agreement would
be consistently unviable even
under honesty, which is an 
uninteresting case that we 
henceforth disregard.

In summary, the BOSCO service is thus tasked with
estimating~$\mathbb{U}_X$ and $\mathbb{U}_Y$ and
constructing choice sets~$V_X$ and~$V_Y$ such that
the thereby defined bargaining game has an equilibrium~$\sigma^{\ast}$ 
with a low~\textit{PoD}. After the BOSCO service found such a
configuration, it communicates the 
mechanism-information 
set~$(\mathbb{U}_X,\mathbb{U}_Y,V_X,V_Y,\sigma^{\ast})$, 
to the communicating parties, which can verify
that~$\sigma^{\ast}$ is indeed a Nash equilibrium
and thus indeed follow the strategy that
is assigned to them in the equilibrium.
Hence, each party~$Z$
plays the bargaining game
by applying the equilibrium
strategy~$\sigma^{\ast}_Z$
to its true utility value~$u_Z$
and commits the resulting
claim to the BOSCO service,
which then decides on
agreement conclusion and,
in case of negotiation success,
on the exchanged cash compensation~$\Pi_{X\rightarrow Y}$.

\subsection{BOSCO Properties}
\label{sec:negotiation:provable-properties}

After describing the BOSCO mechanism in
the previous section, we now analyze the
mechanism with respect to the properties
listed in~\cref{sec:negotiation:properties}.
First of all, it is clear that the BOSCO
mechanism is budget-balanced, as the
cash transfer paid by one party is
exactly the cash transfer received by
the other party. We prove other
mechanism properties below.

\begin{theorem}
    The BOSCO mechanism offers \textbf{strong individual rationality},
    i.e., \begin{equation}\forall u_X,u_Y.\ \overline{u}_X \geq 0 \text{ and } \overline{u}_Y \geq 0.\end{equation}
    \label{thm:strong-rationality}
\end{theorem}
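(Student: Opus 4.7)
The plan is to prove the theorem by a case analysis on the mechanism's outcome, combined with a no-overclaim lemma that exploits the cancellation option $-\infty\in V_Z$. Let $v_X=\sigma_X^{\ast}(u_X)$ and $v_Y=\sigma_Y^{\ast}(u_Y)$ be the equilibrium claims. If $v_X+v_Y<0$, the BOSCO service cancels the negotiation and $\overline{u}_X=\overline{u}_Y=0$, so the theorem holds trivially. Otherwise the agreement is concluded and $\overline{u}_X=u_X-(v_X-v_Y)/2$; substituting the success condition $v_Y\geq -v_X$ yields the key bound $\overline{u}_X\geq u_X-v_X$. Hence strong individual rationality reduces to the no-overclaim lemma $\sigma_Z^{\ast}(u_Z)\leq u_Z$ for every $Z\in\{X,Y\}$ and every~$u_Z$.

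I would prove this lemma by contradiction. Suppose $v_X^{\ast}=\sigma_X^{\ast}(u_X)>u_X$ for some $u_X$, and let $v_X^{-}$ denote the largest element of $V_X$ with $v_X^{-}\leq u_X$; by the cancellation-option construction, such an element always exists (at worst $v_X^{-}=-\infty$). Using the linear decomposition $\mathbb{E}[\overline{u}_X\mid v_X]=m_X(v_X)u_X+q_X(v_X)$ from~\cref{eq:negotiation:best-choice}, I would split the difference $\mathbb{E}[\overline{u}_X\mid v_X^{\ast}]-\mathbb{E}[\overline{u}_X\mid v_X^{-}]$ into a contribution from $v_Y\in[-v_X^{\ast},-v_X^{-})$ (where only $v_X^{\ast}$ succeeds) and a contribution from $v_Y\geq -v_X^{-}$ (where both claims succeed). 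The second contribution is $(v_X^{-}-v_X^{\ast})/2\cdot m_X(v_X^{-})\leq 0$ since $v_X^{-}<v_X^{\ast}$. For the first, I would bound the integrand $u_X+(v_Y-v_X^{\ast})/2$ using that it attains the value $u_X-v_X^{\ast}<0$ at the left endpoint, and combine it with the second contribution to obtain an overall non-positive difference. This contradicts the best-response property of $v_X^{\ast}$; in the boundary case of equality, $\sigma_Y^{\ast}$ must place zero probability mass on $\{v_Y\geq -v_X^{\ast}\}$, so the mechanism always cancels and $\overline{u}_X=0$ anyway.

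Combining the no-overclaim lemma with the case analysis and the symmetric argument for $Y$ then yields $\overline{u}_X\geq u_X-v_X\geq 0$ and $\overline{u}_Y\geq 0$, completing the proof. I expect the main obstacle to be the first-region bound in the dominance comparison: the integrand $u_X+(v_Y-v_X^{\ast})/2$ is not uniformly negative on $[-v_X^{\ast},-v_X^{-})$, so the argument must pair terms across the two regions or exploit the monotonicity of $q_X$ relative to $m_X$. The edge case $v_X^{-}=-\infty$ (when $V_X$ contains no finite element at or below $u_X$) is particularly delicate: the alternative contributes only the cancellation value~$0$, and the dominance proof may need to invoke BOSCO's choice-set construction to exclude such sparsity, or separately verify that in this case the support of $\sigma_Y^{\ast}$ is confined to the subset of $v_Y$ guaranteeing $\overline{u}_X\geq 0$ pointwise.
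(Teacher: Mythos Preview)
Your reduction to a ``no-overclaim lemma'' $\sigma_Z^{\ast}(u_Z)\leq u_Z$ is the fatal step: this lemma is \emph{false}. Take $V_X=\{-\infty,10\}$, $V_Y=\{-\infty,6\}$, with $\mathbb{U}_X$ supported on $[3,7]$ and $\mathbb{U}_Y$ on $[-1,3]$. One checks directly that the pair $(\sigma_X^{\ast},\sigma_Y^{\ast})$ in which $X$ always plays $10$ and $Y$ always plays $6$ is a Nash equilibrium (each strategy is a best response to the other). For $u_X=5$ this gives $\sigma_X^{\ast}(5)=10>5$, so the lemma fails; yet $\overline{u}_X=5-(10-6)/2=3\geq 0$, so the theorem still holds. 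The point is that your bound $\overline{u}_X\geq u_X-v_X$, obtained by plugging in the worst-case $v_Y=-v_X$, is far too crude: it discards precisely the information that $v_Y$ is itself an equilibrium choice of~$Y$ and hence cannot be arbitrarily low. The obstacle you anticipated in the first-region integral is not a technical wrinkle---it is a symptom of the lemma being wrong.

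The paper's proof proceeds quite differently and avoids this detour entirely. It assumes directly that $\overline{u}_X<0$ for some realized pair $(u_X,u_Y)$, observes that then $u_X-(v_X-v_Y)/2<0$ for every $v_Y\in V_Y$ with $-v_X\leq v_Y\leq \sigma_Y^{\ast}(u_Y)$ (by monotonicity of the payoff in $v_Y$), and argues that $X$ can strictly improve its expected utility by lowering its claim to some $v_X'<v_X$ with $-v_X'>\sigma_Y^{\ast}(u_Y)$: all the negative summands just identified drop out of the expectation, while every surviving summand increases by $(v_X-v_X')/2>0$. The availability of $-\infty\in V_X$ guarantees such a $v_X'$ exists. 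This contradicts the best-response property of $\sigma_X^{\ast}$. The key conceptual difference is that the paper compares against a deviation tailored to the \emph{specific} bad outcome $\sigma_Y^{\ast}(u_Y)$, rather than trying to establish a uniform overclaim bound independent of $Y$'s play.
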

\begin{proof}
    Given utility~$u_X$ of party~$X$, $\sigma^{*}_X(u_X)$  is the best choice for party~$X$.
    If~$u_Y$ is such that~$\sigma^{\ast}_Y(u_Y) < -\sigma^{*}_X(u_X)$, 
    then the agreement is not concluded and~$\overline{u}_X = \overline{u}_Y = 0$.
    Conversely, if~$\sigma^{\ast}_Y(u_Y) \geq -\sigma^{\ast}_X(u_X)$,
    the agreement is concluded and~$\sigma^{\ast}_Y(u_Y)$ appears 
    in~$\mathbb{E}[\overline{u}_X](u_X, \sigma^{\ast}_X(u_X))$ from~\cref{eq:negotiation:best-choice}.
    Now assume that~$\overline{u}_X = u_X - (v_X - \sigma^{\ast}_Y(u_Y))/2 < 0$.
    In that case, $u_X - (v_X - v_Y)/2$ is negative 
    for any~$v_Y \in V_Y$ with $-v_X \leq v_Y \leq \sigma^{\ast}_Y(u_Y)$.
    Hence, $\mathbb{E}[\overline{u}_X]$ could be
    increased by choosing~$v_X' < v_X \in V_X$ with~$-v_X' > \sigma^{\ast}_Y(u_Y) \geq -v_X$
    such that all the mentioned~$v_Y$ drop from the objective,
    as the summands associated with these~$v_Y$ only contribute negative values
    and the summands associated with the remaining choices of party~$Y$ increase.
    Thanks to $-\infty \in V_X$, such a choice is always possible.
    However, this non-optimality of~$\sigma^{\ast}_X(u_X)$ contradicts
    the best-response character of~$\sigma^{\ast}_X$,
    which invalidates the assumption of a negative~$\overline{u}_X$.
    Hence, if an agreement is concluded, $\overline{u}_X \geq 0$ 
    for any~$u_X$ and~$u_Y$ (The case for~$\overline{u}_Y$ is
    analogous). In summary, such non-negativity of after-negotiation utility
    exists in any case (non-conclusion and conclusion),
    establishing strong individual rationality.
\end{proof}

\begin{theorem}
    The BOSCO mechanism is \textbf{sound}, i.e., it never leads to conclusion of a
    non-viable agreement:\footnote{In order to be ex-post efficient, 
    the mechanism would additionally need to be \emph{complete} in the sense that
    all viable agreements are concluded. However, as noted in~\cref{sec:negotiation:properties},
    this property is not achievable given other desirable properties.} 
    \begin{equation}\forall u_X,u_Y.\ \sigma^{\ast}_X(u_X) + \sigma^{\ast}_Y(u_Y) \geq 0 \implies u_X + u_Y \geq 0 \end{equation}
    \label{thm:soundness}
\end{theorem}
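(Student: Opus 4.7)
My plan is to derive soundness as a direct corollary of \cref{thm:strong-rationality}, rather than reasoning about the structure of best-response strategies from scratch. The key observation is that the cash transfer $\Pi_{X\rightarrow Y}$ is antisymmetric with respect to the two parties: what one party pays, the other receives. So whenever summing the after-negotiation utilities, the bargaining transfer cancels out cleanly.

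Concretely, I would first suppose that $\sigma^{\ast}_X(u_X) + \sigma^{\ast}_Y(u_Y) \geq 0$, i.e., the mechanism concludes the agreement. In this case, the BOSCO service sets $\Pi_{X\rightarrow Y} = (\sigma^{\ast}_X(u_X) - \sigma^{\ast}_Y(u_Y))/2$, producing after-negotiation utilities $\overline{u}_X = u_X - \Pi_{X\rightarrow Y}$ and $\overline{u}_Y = u_Y + \Pi_{X\rightarrow Y}$. By \cref{thm:strong-rationality}, both $\overline{u}_X \geq 0$ and $\overline{u}_Y \geq 0$ hold for any $u_X, u_Y$. Adding the two inequalities yields
\begin{equation}
    0 \leq \overline{u}_X + \overline{u}_Y = (u_X - \Pi_{X\rightarrow Y}) + (u_Y + \Pi_{X\rightarrow Y}) = u_X + u_Y,
\end{equation}
which is exactly the desired conclusion.

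There is really no hard step here once strong individual rationality is established; the ``main obstacle'' would only arise if one tried to prove soundness directly, without leveraging \cref{thm:strong-rationality}. In that alternative route, one would have to argue structurally that no best-response equilibrium could pair two strategies whose total claim is non-negative while the true utilities sum to a negative value — which would likely require a case analysis on the thresholds $t_{X,i}, t_{Y,j}$ produced by \Cref{alg:best-response}, together with a deviation argument showing that a party with sufficiently negative $u_Z$ would strictly prefer the $-\infty$ choice that cancels the agreement. Fortunately, \cref{thm:strong-rationality} encapsulates exactly this deviation argument, so invoking it keeps the proof to a couple of lines.
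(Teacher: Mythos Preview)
Your proof is correct and matches the paper's own argument essentially line for line: the paper also assumes the agreement is concluded, invokes \cref{thm:strong-rationality} to get $\overline{u}_X \geq 0$ and $\overline{u}_Y \geq 0$, and adds the two inequalities so that the $\Pi_{X\rightarrow Y}$ terms cancel. The only cosmetic difference is that the paper first rewrites the two bounds as $u_X \geq \Pi_{X\rightarrow Y}$ and $u_Y \geq -\Pi_{X\rightarrow Y}$ before summing, whereas you sum $\overline{u}_X + \overline{u}_Y$ directly.
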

\begin{proof}
    If~$\sigma^{\ast}_X(u_X) + \sigma^{\ast}_Y(u_Y) \geq 0$, 
    the agreement is concluded and~$\overline{u}_X = u_X - \Pi_{X\rightarrow Y}$
    and~$\overline{u}_Y = u_Y + \Pi_{X\rightarrow Y}$. By strong
    rationality (\cref{thm:strong-rationality}), it holds 
    that~$\overline{u}_X \geq 0$ and~$\overline{u}_Y \geq 0$,
    which implies~$u_X \geq \Pi_{X\rightarrow Y}$ and
    $u_Y \geq -\Pi_{X\rightarrow Y}$. By addition
    of the inequalities, $u_X + u_Y \geq \Pi_{X\rightarrow Y} - \Pi_{X\rightarrow Y} =  0$.
    Hence, $\sigma^{\ast}_X(u_X) + \sigma^{\ast}_Y(u_Y) \geq 0 \implies u_X + u_Y \geq 0$.
\end{proof}

\begin{theorem}
    The BOSCO mechanism always leads to an equilibrium~$\sigma^{\ast}$
    with~$\mathit{PoD}(\sigma^{\ast}) \in [0, 1]$.
\end{theorem}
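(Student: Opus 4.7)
The plan is to establish the two bounds on $\mathit{PoD}(\sigma^{\ast})$ separately, by showing that the relevant inequalities hold pointwise in $(u_X, u_Y)$ and then integrating against the joint utility distribution $\mathbb{U}$. Since $\mathit{PoD} = 1 - \mathbb{E}[\mathcal{N}\mid\sigma^{\ast}]/\mathbb{E}[\mathcal{N}\mid\sigma^{\top}]$ and the denominator is strictly positive by the standing assumption at the end of \cref{sec:negotiation:mechanism}, it suffices to prove $0 \leq \mathbb{E}[\mathcal{N}\mid\sigma^{\ast}] \leq \mathbb{E}[\mathcal{N}\mid\sigma^{\top}]$.

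For the upper bound $\mathit{PoD} \leq 1$ (i.e.\ $\mathbb{E}[\mathcal{N}\mid\sigma^{\ast}] \geq 0$), I would invoke \cref{thm:strong-rationality}: strong individual rationality gives $\overline{u}_X, \overline{u}_Y \geq 0$ for every realization of $(u_X, u_Y)$, so the integrand $\mathcal{N}(u_X,u_Y,\sigma^{\ast}_X(u_X),\sigma^{\ast}_Y(u_Y)) = \overline{u}_X \cdot \overline{u}_Y$ is pointwise non-negative whenever the agreement is concluded, and equal to $0$ by definition otherwise. Non-negativity of the expectation then follows immediately.

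For the lower bound $\mathit{PoD} \geq 0$, I would argue pointwise in $(u_X, u_Y)$ and distinguish two cases. If $u_X + u_Y < 0$, then by the soundness theorem (\cref{thm:soundness}) the equilibrium never concludes the agreement, giving $\mathcal{N}(\sigma^{\ast}) = 0$; and the truthful strategy also yields $\mathcal{N}(\sigma^{\top}) = 0$ by definition of $\mathcal{N}$. If $u_X + u_Y \geq 0$, then under truthful play we have $\Pi_{X\to Y} = (u_X - u_Y)/2$ and hence $\mathcal{N}(\sigma^{\top}) = \bigl((u_X+u_Y)/2\bigr)^2$. Under the equilibrium, either the agreement is not concluded (and $\mathcal{N}(\sigma^{\ast}) = 0$), or it is concluded with after-negotiation utilities $\overline{u}_X, \overline{u}_Y \geq 0$ (by strong rationality) whose sum equals $u_X + u_Y$ (since the cash compensation is a pure transfer). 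AM--GM on two non-negative numbers with fixed sum then yields $\mathcal{N}(\sigma^{\ast}) = \overline{u}_X \overline{u}_Y \leq \bigl((u_X+u_Y)/2\bigr)^2 = \mathcal{N}(\sigma^{\top})$. Integrating this pointwise inequality against $\mathbb{U}$ gives $\mathbb{E}[\mathcal{N}\mid\sigma^{\ast}] \leq \mathbb{E}[\mathcal{N}\mid\sigma^{\top}]$, which is what we need.

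I expect the main subtlety, rather than a genuine obstacle, to be the careful bookkeeping across the two regimes (agreement concluded vs.\ not concluded) under each of the two strategy profiles: the comparison between $\sigma^{\ast}$ and $\sigma^{\top}$ requires combining soundness (to rule out spurious conclusions under $\sigma^{\ast}$ when $u_X + u_Y < 0$) with the AM--GM step (when both profiles conclude). The existence of $\sigma^{\ast}$ itself is not proven formally in this paper (the bargaining game is not shown to be a potential game); the theorem should therefore be read as asserting that \emph{any} equilibrium produced by the best-response procedure of \cref{alg:best-response} satisfies the stated $\mathit{PoD}$ bounds, with the empirical convergence remark from \cref{sec:negotiation:mechanism} covering the existence side.
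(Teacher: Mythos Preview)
Your proposal is correct and follows essentially the same approach as the paper: a pointwise case split on the sign of $u_X+u_Y$, invoking soundness (\cref{thm:soundness}) for the negative case and strong rationality (\cref{thm:strong-rationality}) plus the Nash-product maximality for the non-negative case. Your treatment is in fact slightly more explicit than the paper's---you separately argue $\mathbb{E}[\mathcal{N}\mid\sigma^{\ast}]\geq 0$ and spell out the AM--GM step where the paper simply appeals to ``optimality of $\mathcal{N}^{\top}$''---but the underlying argument is the same.
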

\begin{proof}
    We show that~$\mathcal{N}^{\ast} := \mathcal{N}(u_X, u_Y, \sigma^{\ast}_X(u_X), \sigma^{\ast}_Y(u_Y)) \leq
    \mathcal{N}(u_X, u_Y, u_X, u_Y) =: \mathcal{N}^{\top}$ for all~$u_X$, $u_Y$,
    which implies the theorem. 
    If~$u_X + u_Y < 0$, the agreement is neither concluded under truthfulness nor,
    by soundness (\cref{thm:soundness}), given the equilibrium strategy, 
    resulting in~$\mathcal{N}^{\ast} = \mathcal{N}^{\top} = 0$.
    Conversely, if~$u_X + u_Y \geq 0$, 
    $\mathcal{N}^{\ast} = 0 \leq \mathcal{N}^{\top}$
    if~$\sigma^{\ast}_X(u_X) + \sigma^{\ast}_Y(u_Y) < 0$ and
    $\mathcal{N}^{\ast} \leq \mathcal{N}^{\top}$ 
    if~$\sigma^{\ast}_X(u_X) + \sigma^{\ast}_Y(u_Y) \geq 0$
    by optimality of~$\mathcal{N}^{\top}$. Note that by strong
    rationality (\cref{thm:strong-rationality}), the
    case where~$\sigma^{\ast}_X(u_X) + \sigma^{\ast}_Y(u_Y) \geq 0$,
    $\overline{u}_X,\overline{u}_Y < 0$ and $\overline{u}_X\overline{u}_Y > \mathcal{N}^{\top}$
    is not possible.
\end{proof}

\begin{theorem}
    The BOSCO mechanism is \textbf{privacy-preserving}, i.e., an exact
    reconstruction of the true utility of a party from its choice 
    is impossible: \begin{equation}\forall v_X \in V_X.\ |\{u_X|\sigma^{\ast}_X(u_X) = v_X\}| \neq 1\end{equation}
\end{theorem}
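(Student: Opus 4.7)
The plan is to leverage the step-function structure of the equilibrium best-response strategy $\sigma^{\ast}_X$ that was derived in the discussion preceding~\cref{alg:best-response}. That construction establishes that $\sigma^{\ast}_X : \mathbb{R} \to V_X$ is piecewise constant: there is a threshold sequence $\{t_{X,i}\}_{i \in \{1,\ldots,W_X+1\}}$ (with $t_{X,1} = -\infty$) such that $\sigma^{\ast}_X(u_X) = v_{X,i}$ whenever $u_X \in [t_{X,i}, t_{X,i+1})$. So the level sets of $\sigma^{\ast}_X$ are exactly these half-open real intervals.

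First, I would fix an arbitrary $v_X \in V_X$ and pick the index $i$ with $v_X = v_{X,i}$, so that the preimage $\{u_X \,|\, \sigma^{\ast}_X(u_X) = v_X\}$ equals the half-open interval $[t_{X,i}, t_{X,i+1})$. I would then case-split on the two possibilities left open by~\cref{alg:best-response}: either $t_{X,i} \geq t_{X,i+1}$, in which case the interval is empty and the preimage has cardinality zero; or $t_{X,i} < t_{X,i+1}$, in which case $[t_{X,i}, t_{X,i+1})$ is a non-degenerate real interval and hence uncountable. In neither case does the cardinality equal one, which is exactly what the theorem asserts.

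The only subtle point is verifying that the preimage is genuinely an interval of reals with no isolated points or boundary ambiguity. This follows because the thresholds $t_{X,i}$ arise as crossing points of the linear expected-utility functions $m_{X,j}u_X + q_{X,j}$, and~\cref{alg:best-response} breaks all ties consistently on a half-open-interval basis, so every $u_X \in \mathbb{R}$ is assigned to exactly one $v_{X,i}$. Thus no isolated singleton preimages can arise from tie-breaking.

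The main obstacle, if any, is really just bookkeeping: making sure the empty-preimage case (degenerate thresholds produced by the second loop in~\cref{alg:best-response}, which can set $t_{X,i} = t_{X,i+1}$) is covered explicitly, and that the claim is about set-theoretic cardinality rather than measure. Once the level-set structure of $\sigma^{\ast}_X$ is noted, the argument reduces to the elementary observation that a half-open real interval is either empty or uncountable.
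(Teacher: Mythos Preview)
Your proposal is correct and follows essentially the same approach as the paper: both arguments reduce to the observation that the preimage of any choice under $\sigma^{\ast}_X$ is a half-open real interval $[t_{X,i}, t_{X,i+1})$, which is either empty or uncountable and hence never a singleton. Your version is more explicit about the empty-interval case and the tie-breaking bookkeeping, but the core idea is identical.
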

\begin{proof}
    In terms of privacy, the only
    problematic case would arise
    if the range~$[t_{X,i}, t_{X,i+1})$
    contained only one value, which
    would allow the derivation
    of that value~$u_X$ from the
    associated choice~$v_{X,i}$.
    However, since half-open intervals
    on the real numbers cannot
    contain a single value, this
    problematic case cannot arise.
\end{proof}

While exact reconstruction
of the true utility is thus
impossible, it might
still be possible to
predict the true utility
with high precision if
the interval associated
with a choice is very
small. Hence, the degree
to which an equilibrium
preserves privacy could
be quantified (e.g., by
the length of the shortest
non-empty interval) and
then taken into account
by the BOSCO service
when picking an
equilibrium.

\subsection{Choice-Set Construction}
\label{sec:negotiation:optimization}

\begin{figure}[tb]
    \centering
    \includegraphics[width=\columnwidth, trim=0 0 0 0]{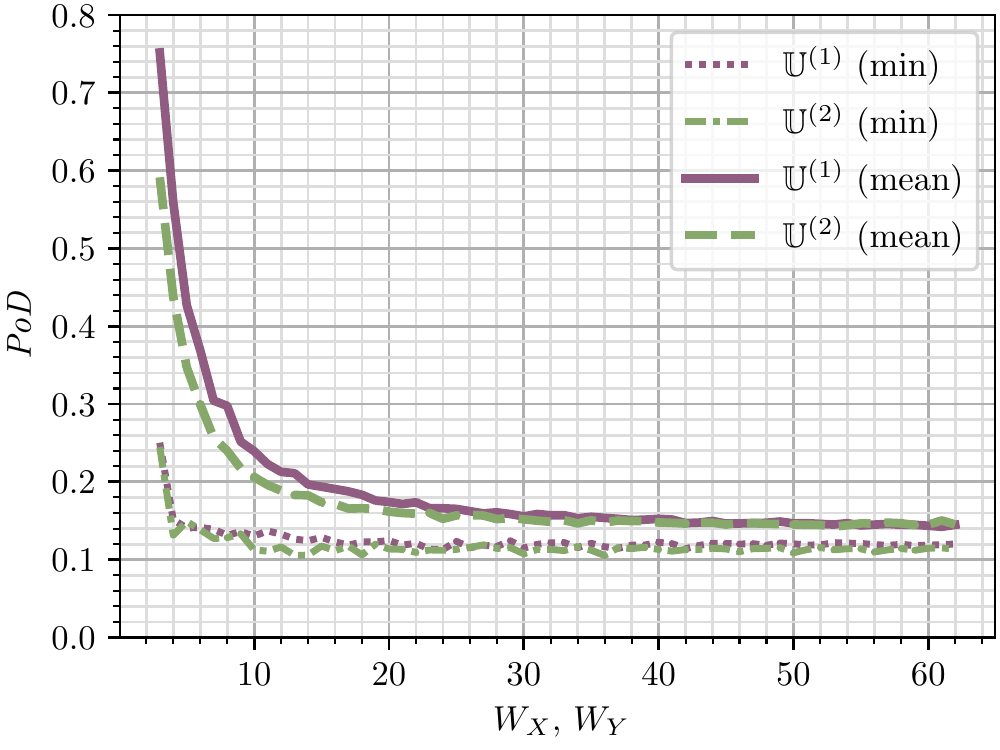}
    \caption{Price of Dishonesty (minimum and mean) guaranteed by BOSCO depending
    on the number of choices~$W_X = W_Y$
    for two different utility 
    distributions~$\mathbb{U}^{(1)}$ 
    and~$\mathbb{U}^{(2)}$.}
    \label{fig:negotiation:pod}
\end{figure}

It remains to analyze how the
choice sets should be constructed
such that equilibria with a 
low Price of Dishonesty result.
Surprisingly, we have found 
that random generation of such
choice sets works reasonably well
in practice. In particular, the
choice set~$V_X$ for any party~$X$ can be
constructed by sampling a high
enough number of choices~$v_X$
from the utility 
distribution~$\mathbb{U}_X$.
With multiple trials of such
random choice-set generation,
choice sets with a relatively
low Price of Dishonesty can be
found.

In~\cref{fig:negotiation:pod}, for
example, we analyze the 
resulting~$\mathit{PoD}$ from
random choice-set generation
for two uniform utility distributions,
namely $\mathbb{U}^{(1)}$, which
is a uniform distribution of~$(u_X,u_Y)$ on~$[-1, 1]\times[-1,1]$,
and $\mathbb{U}^{(2)}$, which is
a uniform distribution 
on~$[-\frac{1}{2},1]\times[-\frac{1}{2},1]$.
For each choice-set cardinality~$W_X = W_Y$,
we generate 200 random choice-set
combinations and find the mean and
the minimum of the 
associated~$\mathit{PoD}$ values.
Clearly, a higher number of choices
generally helps to reduce 
the Price of Dishonesty, 
but given around 50 choices,
adding more choices does not
improve the mechanism efficiency. Interestingly,
we also observe that the
number of \emph{equilibrium choices} (i.e., choices
which have a non-empty associated interval in the equilibrium
strategy) for each party
reaches 4 at that point
and is not further increased
for more possible choices.
Hence, the BOSCO service can
increase the number of possible choices
until the resulting~$\mathit{PoD}$
values do not substantially
decrease anymore. With this procedure,
the BOSCO mechanism could
guarantee a Price of Dishonesty
of around 10\% for 
both~$\mathbb{U}^{(1)}$ 
and~$\mathbb{U}^{(2)}$ in
the example at hand,
meaning that the negotiation
can be expected to be 10\%
less efficient than under
the unrealistic assumption
of perfect honesty.

\section{Effect on Path Diversity}
\label{sec:experiments}

In this section, we attempt to quantify
the effect of mutual\-ity-based
agreements on path diversity in the Internet.
Starting from the CAIDA
AS-relationship dataset~\cite{caida}, 
we construct a network of ASes
where a provider--customer or peering relationship results
in a single provider--customer or peering link, respectively.
In this graph, we generate
all possible mutuality-based agreements (MAs) for
the whole topology: For every pair $(A,B)$ of peers,
we generate an MA in which~$A$ gives
$B$ access to all its providers and peers 
which are not customers of $B$, and vice versa.
As MAs consist of an AS~$A$ 
giving  its peer $B$ access to a provider or 
another peer, these agreements enlarge the set of 
paths with 3 AS hops and
2 inter-AS links (henceforth: length-3 paths) for $B$, as well as the 
set of ASes that $B$ can reach with such length-3 paths (henceforth: nearby destinations). 

Given this graph and these MAs, we perform the following analysis
for 500 randomly chosen ASes. 
First, we 
find the GRC-con\-for\-ming length-3 paths
starting at the given AS. Then, 
we find the MA-created length-3 paths
for which the given AS is an end-point.
The number of these additional paths
and the number of additional nearby destinations
are thus metrics 
for the increased path diversity that the given AS enjoys
thanks to MAs.

\begin{figure}[tb]
    \centering
    \includegraphics[width=\columnwidth, trim=0 0 0 0]{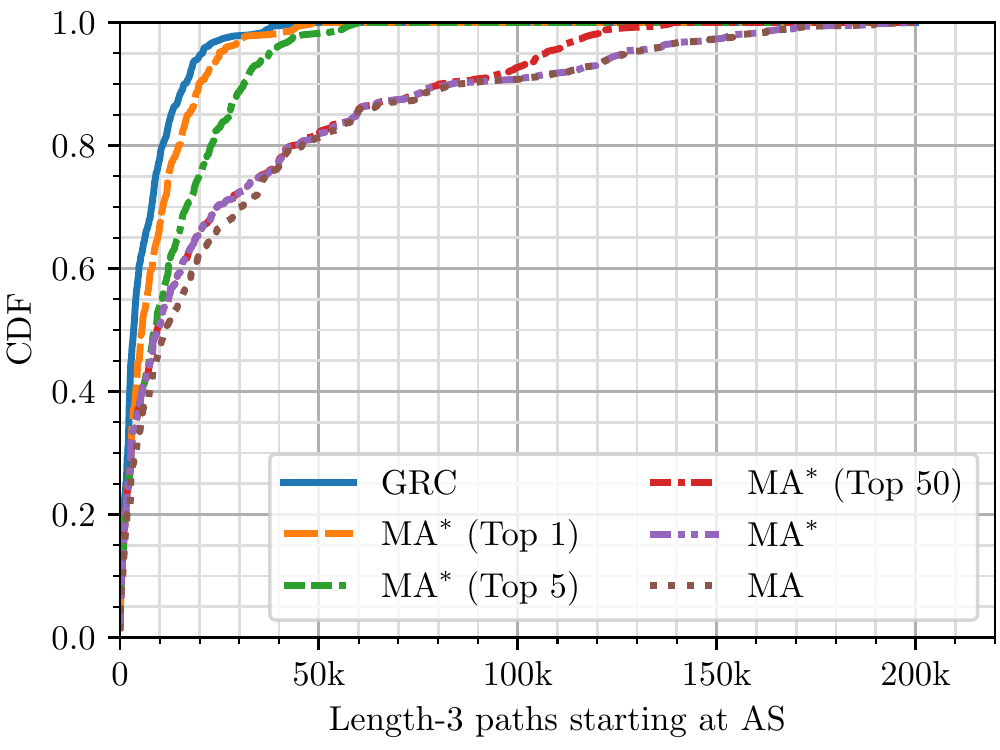}
    \caption{Distribution of ASes with respect
    to the number of length-3 paths starting
    at the AS, given different degrees of MA conclusion.}
    \label{fig:experiments:paths}
\end{figure}

\subsection{Number of Paths and
Nearby Destinations}
\Cref{fig:experiments:paths} shows the
substantial increase in the number of length-3 paths with
that are potentially available to these ASes thanks to
mutuality-based agreements: For example,
whereas none of the analyzed ASes have more
than 45,000 GRC-conforming paths with length~3,
20\% of the analyzed ASes have more than
45,000 length-3 paths if all 
MAs are concluded (CDF for 'MA'). Note that for a fixed
source and a fixed destination, all length-3 paths 
are disjoint by definition.

Since the conclusion of all
possible MAs is an extreme
case (although MAs could
be negotiated in an automated
fashion with the mechanism presented
in~\cref{sec:negotiation}), we further
analyze the effects of non-comprehensive
agreement conclusion. Initially, we note
that an MA can provide an AS with new paths
in two manners. First, an AS can \emph{directly}
gain an MA path by concluding an MA that includes
the path (e.g., as AS~$D$ gains the path~$DEB$
in~\cref{fig:model:new-examples} from
the MA with AS~$E$). Second, an AS
can \emph{indirectly} gain an MA path 
by being the subject of an MA 
that includes the
path (e.g., as AS~$B$ or AS~$F$ gain
paths to AS~$D$ from the MA 
between AS~$D$ and AS~$E$ 
in~\cref{fig:model:new-examples}).
Interestingly, most additional MA paths
are directly gained paths, 
as the similarity of the CDFs for
all MA paths~(MA) and directly gained
MA paths ($\mathrm{MA}^{\ast}$)
in~\cref{fig:experiments:paths} suggests.
Hence, the ASes bearing the negotiation
effort of an MA have a strong incentive to negotiate
that MA despite the effort,
because they typically are its biggest beneficiaries. 
Moreover, we find that an AS already tends to
obtain substantial gains in path diversity
with only a handful of MAs.
This point is demonstrated by the results
for the scenarios where an AS only concludes
the~$n$ MAs which provide it with the most
new paths 
(annotated with~`$\mathrm{MA}^{\ast}$ (Top~$n$)' 
in~\cref{fig:experiments:paths}):
Even if an AS only concludes the single most
attractive agreement from its perspective, 
it stands to gain several thousands 
of new paths.

\begin{figure}[tb]
    \centering
    \includegraphics[width=\columnwidth, trim=0 0 0 0]{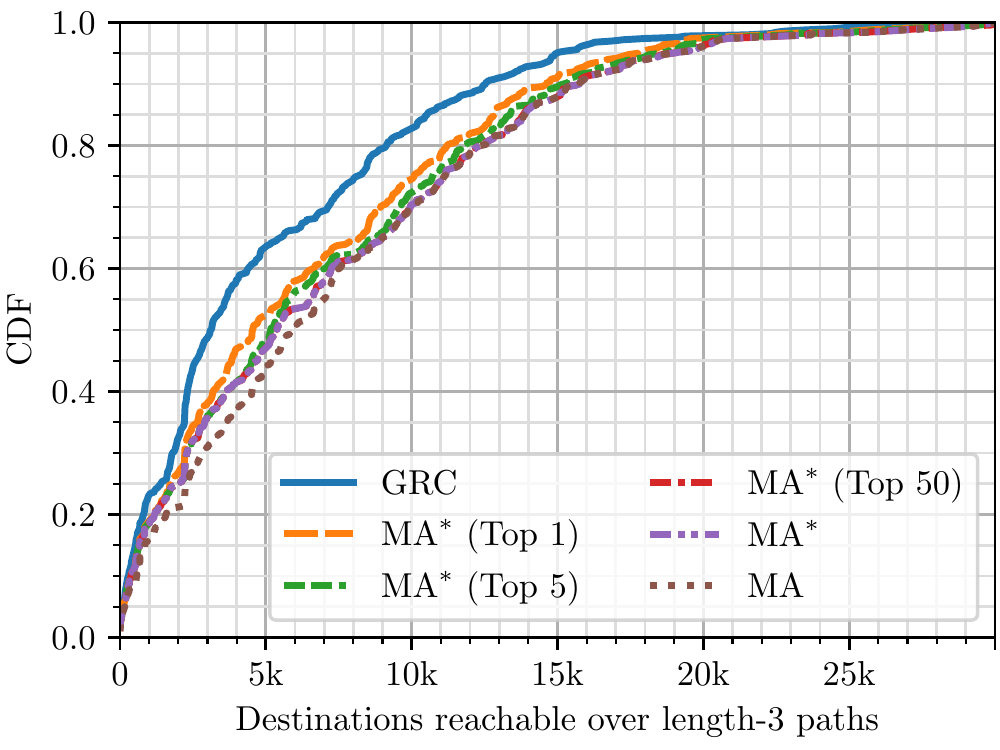}
    \caption{Distribution of ASes with respect
    to the number of destination reachable over length-3 paths, given different degrees of MA conclusion.}
    \label{fig:experiments:destinations}
\end{figure}

\Cref{fig:experiments:destinations} also
illustrates that mutuality-based agreements
enlarge the set of destinations reachable
with paths of length~3: For example, whereas
40\% of the analyzed ASes can reach more than
5,000 destinations over length-3 paths,
57\% of ASes can reach more than 5,000 destinations
over such paths if all MAs are concluded.
Interestingly, very few MAs per AS already suffice
to reap most of these benefits, as the results for
non-comprehensive agreement conclusion demonstrate.

For the set of analyzed ASes,
the average number of additional length-3
paths thanks to mutuality-based agreements is 
22,891 paths (maximum: 196,796 paths), 
and the average number of 
additionally reachable destinations over
length-3 paths is 2,181 ASes (maximum: 7,144 ASes).
Interestingly, the gains in terms of additionally
reachable destinations are more broadly distributed
than the gains regarding paths.
The explanation for this phenomenon is that
mutuality-based agreements in very densely connected
regions of the Internet lead to a high number
of additional length-3 paths, but have little impact on the number
of ASes reachable over such paths.

\subsection{Geodistance}
\label{sec:experiments:geodistance}

In order to gain a more qualitative perspective
on the additional paths enabled by MAs, we also
investigate the geographical length (henceforth:
geodistance) of these new paths. Such geodistance
is an important determinant of path latency~\cite{singla2014internet},
which is typically considered a core aspect
of path diversity. As the CAIDA AS-relationship
dataset~\cite{caida} does not directly contain
the necessary information, we additionally
build on the CAIDA prefix-to-AS dataset~\cite{caidaprefix}, the GeoLite2 database~\cite{geolite2}, and
the CAIDA geographic AS-relationship
dataset~\cite{caidageo}.
In particular, we determine the geolocation 
of any AS by finding the IP prefixes 
associated
with the AS number in the prefix-to-AS dataset,
determining the geolocation of the IP prefixes
via the GeoLite2 database, and averaging
the resulting coordinates to obtain
the center of gravity of the AS. 
With such averaging,
the potentially considerable intra-AS latency
of geographically distributed top-tier ASes
is automatically taken into account. 
Moreover, we obtain the geolocation
of an AS interconnection from the
CAIDA geographic AS-relationship dataset.
The geodistance of a length-3 
path~$\pi = (A_1, \ell_{12}, A_2, \ell_{23}, A_3)$, where~$A_i$ are ASes
and~$\ell_j$ are inter-AS links,
is then computed as~$d(\pi) = d(A_1,\ell_{12}) + d(\ell_{12},\ell_{23}) + d(\ell_{23}, A_3)$,
where~$d(X, Y)$ is the geodistance between
two points. If there are multiple known
AS interconnections, the geodistance
of the AS-level path~$(A_1,A_2,A_3)$ 
is computed for~$\ell_{12}$ and~$\ell_{23}$
that minimize~$d(\pi)$.

Using this measure of path geodistance, we again
compare the set of paths that conform to the GRC
and the set of paths enabled by novel MAs. For every analyzed AS pair connected by
at least one length-3 GRC path, we determine the maximum,
median, and minimum geodistance given
the length-3 GRC paths connecting the AS pair.
In a next step, we determine the geodistance of the additional
MA paths and check for each MA path whether it
is lower than the maximum, median, or minimum GRC geodistance,
respectively. Each AS pair is then characterized
by the number of MA paths below these
comparison thresholds. The aggregate results
of this comparison method are presented in~\cref{fig:experiments:distance}.

\Cref{fig:experiments:distance} shows that through MAs,
around 50\% of AS pairs gain at least 1 path with a lower
geodistance than the minimum-geodistance GRC path,
suggesting that inter-AS latency can be decreased
by MAs in these cases.
Around 25\% of AS pairs even gain at least 5 paths that improve
upon the minimum GRC geodistance, and at least 7 and 8 paths
that improve upon the median and maximum GRC geodistance,
respectively. Another 20\% of AS pairs only gain MA paths
with a higher geodistance than the maximum GRC geodistance
(or no new paths at all); however, also these additional 
paths have value in terms of reliability.
Regarding the AS pairs experiencing
a reduction in minimum 
geodistance,~\cref{fig:experiments:distance:reduction} illustrates the considerable extent
of the geodistance reduction for these
AS pairs: For example, 50\% of AS pairs
that experience a geodistance reduction
obtain a reduction of more than 24\%.

\begin{figure}[tb]
\begin{subfigure}[b]{\columnwidth}
    \includegraphics[width=\columnwidth, trim=0 0 0 0]{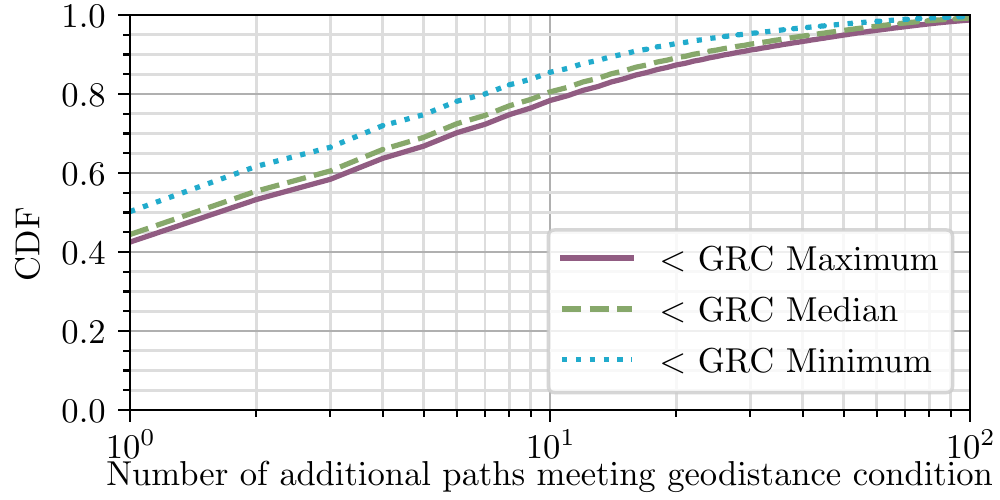}
    \caption{Distribution of AS pairs with respect to geographical length of additional paths from mutuality-based agreements, distinguished by
    satisfied geodistance-comparison
    condition.}
    \label{fig:experiments:distance}
\end{subfigure}

\begin{subfigure}[b]{\columnwidth}
    \centering
    \includegraphics[width=\columnwidth, trim=0 0 0 0]{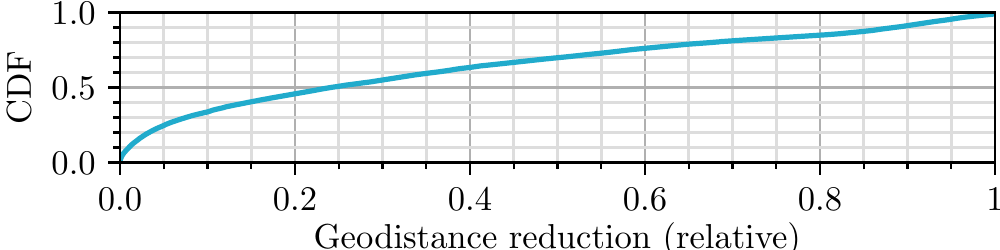}
    \caption{Distribution of relevant AS pairs (i.e., AS pairs which
    experience a geodistance reduction thanks
    to MAs) with respect to the relative geodistance reduction from mutuality-based agreements.}
    \label{fig:experiments:distance:reduction}
\end{subfigure}
\caption{Results of geodistance analysis.}
\label{fig:experiments:geodistance:total}
\end{figure}

\subsection{Bandwidth}
\label{sec:experiments:bandwidth}

We perform an analogous analysis
as in the preceding section with
respect to the bandwidth of
additional paths. To
infer the bandwidth of inter-AS
links, we employ a \emph{degree-gravity
model}~\cite{saino2013toolchain} 
which endows each link
with a capacity value proportional
to the product of the node degrees
of the link end-points. The path
bandwidth is then
the minimum such computed link bandwidth 
of all links in the path.

With such an analysis, we find that
35\% of all investigated AS pairs
obtain a new MA path that has more
bandwidth than the corresponding 
maximum-bandwidth GRC path 
(cf.~\cref{fig:experiments:bandwidth}).
Of these benefiting AS pairs,
50\% gain an MA path with
at least 150\% more bandwidth than
the respective maximum-bandwidth
GRC path (cf.~\cref{fig:experiments:bandwidth:increase}).

\begin{figure}[t]
\begin{subfigure}[tb]{\columnwidth}
    \centering
    \includegraphics[width=\columnwidth, trim=0 0 0 0]{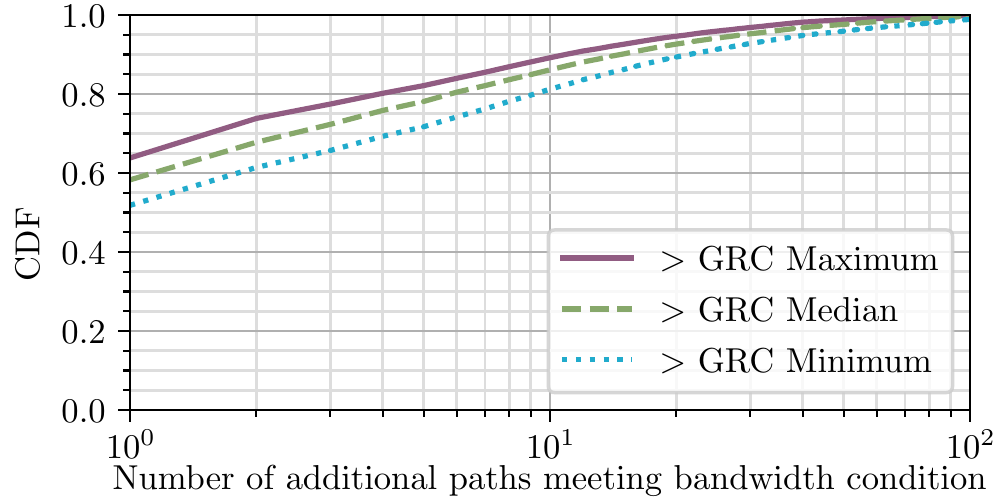}
    \caption{Distribution of AS pairs with respect to bandwidth of additional paths from mutuality-based agreements, distinguished
    by satisfied bandwidth-comparison condition.}
    \label{fig:experiments:bandwidth}
\end{subfigure}

\begin{subfigure}[b]{\columnwidth}
    \centering
    \includegraphics[width=\columnwidth, trim=0 0 0 0]{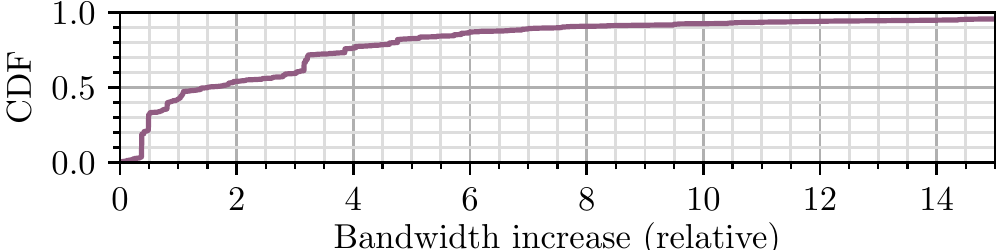}
    \caption{Distribution of relevant AS pairs (i.e., AS pairs which
    experience a bandwidth increase thanks
    to MAs) with respect to the relative bandwidth increase from mutuality-based agreements.}
    \label{fig:experiments:bandwidth:increase}
\end{subfigure}
\caption{Results of bandwidth analysis.}
\label{fig:experiments:bandwidth:total}
\end{figure}
\section{Related Work}
\label{sec:related-work}

After the growth of the Internet
had led to considerable BGP stability problems
in the late 1990s~\cite{govindan1997analysis,labovitz1999experimental}, the research of AS
interconnection agreements, their stability
properties, and their optimal structures
received significant interest.
The commercial reality of the Internet
has been shown to mainly contain two basic types of
agreements that determine
route-forwarding policies, namely pro\-vider--customer
agreements and peering
agreements~\cite{huston1999interconnection,norton2011internet}. The relative exclusiveness of these two
agreement types was reinforced by the important
result of Gao and Rexford, showing that BGP route
convergence is guaranteed if ASes stick to these
two forms of route-forwarding
policies~\cite{gao2001stable}.

However, it is well-known that such strict
BGP policies reduce path quality:
For a majority of routes selected in BGP,
there exists a route that is more
attractive with respect to metrics such as bandwidth,
latency, or loss
rate~\cite{savage1999end,kotronis2016stitching,gupta2015sdx,giotsas2013inferring,gill2013survey,qiu2007toward}.
In fact, motivated by such improvements, already today, many ASes do
not always follow the Gao--Rexford
conditions~\cite{gill2013survey}.
For example, some ISPs
use alternative paths to reach content distribution networks such as
Akamai~\cite{anwar2015investigating,mazloum2014violation,giotsas2014inferring_complex},
other ISPs prefer the peer route through their Tier-1 neighbor
over a longer customer route~\cite{anwar2015investigating},
and so on.
Still, these deviations are narrow in scope, with most non-GRC
policies being explainable by ``sibling'' ASes, which belong to the
same organization and provide mutual transit
services~\cite{gao2001inferring,anwar2015investigating}, and
partial/hybrid provider relationships~\cite{giotsas2014inferring_complex}.
This is due to the fact that more complex policies would threaten
the convergence of the routing process unless they are supported through
multi-AS coordination efforts. These restrictions have also been
acknowledged by previous efforts to provide multipath routing
based on BGP such as MIRO~\cite{xu2006miro}.

While interconnection agreements in \pan{} architectures~%
\cite{Raghavan2004,Raghavan2009,Bhattacharjee2006,Yang2007,
godfrey2009pathlet,Anderson2013,Zhang2007,perrig2017scion,
trammell2018adding} do not need to follow
the guidelines devised to achieve BGP
stability, these agreements should definitely
also respect the economic self-interest of ASes.
The Gao--Rexford guidelines for BGP policies
have been proven to be rational in that
sense~\cite{feigenbaum2006incentive}.
Notable proposals for agreement structures
that attempt to satisfy both AS self-interest
and global efficiency
include Nash peering~\cite{dhamdhere2010value, zarchy2018nash}, where the cooperative
surplus of the agreement is shared among
the parties according to the Nash bargaining solution~\cite{nash1950bargaining}, and
ISP-settlement mechanisms based on
the Shapley value~\cite{ma2010internet}.
It is important to note that
unlike traditional source routing and
similar to MIRO~\cite{xu2006miro},
\pan{s} still offer transit ASes control over
the traffic traversing their network, and hence to
maximize their revenue.
In contrast to MIRO, however, PANs guarantee
path stability.



Finally, our paper has parallels to the work
by Haxell and Wilfong~\cite{10.5555/1347082.1347104},
who showed that a fractional relaxation of the
stable-paths problem of BGP~\cite{griffin2002stable}
guarantees a solution
and that a more flexible routing paradigm
can thus defuse BGP stability issues.
In contrast to this work, however,
our focus lies on the interconnection agreements
under this new paradigm, the extent to which
these agreements increase path diversity,
and their economic rationality
and bargaining aspects.


\section{Conclusion}
\label{sec:conclusion}

This work shows that \pan{} architectures 
enable novel types of interconnection agreements,
thereby substantially improving path diversity
in the Internet and creating new business 
opportunities.
Such new possibilities
exist in \pan{} architectures as they do not rely on the
nowadays essential route-forwarding policy
guidelines formulated by
Gao and Rexford~\cite{gao2001stable} for
route convergence.

Our results show
that path diversity in the Internet benefits
enormously by enabling paths beyond the Gao--Rexford
constraints: By using previously impossible path types,
an AS can reach thousands of new destinations
with 3-hop paths and benefit from hundreds
of thousands of additional paths, some of
which have more desirable characteristics
than the previously available paths. 
There is thus a largely unknown advantage to \pan{}
 architectures: Not only do these
architectures enable end-hosts to \emph{select}
a forwarding path, they also
allow network operators to \emph{offer}
new (and often shorter) forwarding paths.
As \pan{s} are not limited to using a single
path between a pair of ASes, all these paths can 
be used simultaneously according to the requirements
of end-hosts and their applications (e.g.,
low latency for voice over IP and high bandwidth for
file transfers). These direct benefits to end-hosts 
in turn incentivize providers to explore new
interconnection agreements and offer diverse paths
to attract new customers.

We present two methods for designing
agreements that are Pareto-optimal, fair, and 
thus attractive to both parties. We also
show that, assisted by an appropriate
bargaining mechanism, the negotiation of such
agreements can lead to efficient agreements
although necessary information is private. 

We see this work merely as a first step in
exploring the new possibilities for interconnection
agreements in \pan{} architectures. 
There are many exciting opportunities 
for future research in designing and evaluating 
interconnection agreements that can achieve
desirable goals of network operators, such as
network utilization, predictability, and 
security.

\section{Acknowledgments}

We gratefully acknowledge support from ETH Zurich, from the Zurich
Information Security and Privacy Center (ZISC), from SNSF for
project ESCALATE (200021L\_182005) and from WWTF for 
project WHATIF (ICT19-045, 2020-2024). 
Moreover, we thank Giacomo Giuliari and Joel Wanner
for helpful discussions that improved this 
research.

\bibliographystyle{abbrv} 
\begin{small}
\bibliography{refs}
\end{small}
\vspace*{0mm}

\end{document}